\newtheorem{theorem}{Theorem}
\newtheorem{lemma}{Lemma}
\newcommand{\T}{\top}
\newcommand{\Lb}{\mathcal{L}}
\newcommand{\Ub}{\mathcal{U}}
\newcommand{\select}{\mathsf{SELECT}}
\newcommand{\val}{\operatorname{val}}
\newcommand{\opt}{\operatorname{\mathsf{SELECT}}}
\DeclareMathOperator{\Exp}{\mathbb{E}}
\definecolor{soph}{RGB}{130,17,166}
\newcommand{\tmedian}{\mathsf{2median}}
\algrenewcommand{\alglinenumber}[1]{\color{gray}\footnotesize#1:}
\title{A nearly optimal randomized algorithm for explorable heap selection}
\author[1]{Sander Borst\thanks{This project has received funding from the European Research Council (ERC) under the European Union's Horizon 2020 research and innovation programme (grant agreement QIP--805241)}}
\author[1]{Daniel Dadush$^*$}
\author[2]{Sophie Huiberts}
\author[1]{Danish Kashaev}
\affil[1]{Centrum Wiskunde \& Informatica (CWI), Amsterdam}
\affil[2]{Columbia University, New York}
\affil[ ]{\texttt {\{sander.borst,dadush,danish.kashaev\}@cwi.nl}, \texttt{sophie@huiberts.me}}
\begin{document}
\maketitle
\begin{abstract}
Explorable heap selection is the problem of selecting the $n$th smallest value in a binary heap. The key values can only be accessed by traversing through the underlying infinite binary tree, and the complexity of the algorithm is measured by the total distance traveled in the tree (each edge has unit cost). This problem was originally proposed as a model to study search strategies for the branch-and-bound algorithm with storage restrictions by Karp, Saks and Widgerson (FOCS '86), who gave deterministic and randomized $n\cdot \exp(O(\sqrt{\log{n}}))$ time algorithms using $O(\log(n)^{2.5})$ and $O(\sqrt{\log n})$ space respectively.
We present a new randomized algorithm with running time $O(n\log(n)^3)$ against an oblivious adversary using $O(\log n)$ space, substantially improving the previous best randomized running time at the expense of slightly increased space usage. We also show an $\Omega(\log(n)n/\log(\log(n)))$ lower bound for any algorithm that solves the problem in the same amount of space, indicating that our algorithm is nearly optimal.
\end{abstract}

\section{Introduction}

Many important problems in theoretical computer science are fundamentally
search problems.  The objective of these problems is to find a certain
solution from the search space.  In this paper we analyze a search problem
that we call \emph{explorable heap selection}. The problem is related to the
famous branch-and-bound algorithm and was originally proposed by Karp, Saks and Widgerson~\cite{KSW86} to model node
selection for branch-and-bound with low space-complexity.
Furthermore, as we will explain later, the problem remains practically
relevant to branch-and-bound even in the full space setting.

The explorable heap selection problem\footnote{~\cite{KSW86} did not give the
problem a name, so we have attempted to give a descriptive one here.} is an
online graph exploration problem for an agent on a rooted (possibly infinite) binary tree. The nodes of the tree are labeled by distinct real numbers (the key values) that increase along every
path starting from the root. The tree can thus be thought of as a min-heap. Starting at the root, the agent's objective is to select the
$n^{\text{th}}$ smallest value in the tree while minimizing the distance traveled, where each edge of the tree has unit travel cost. The key value of a
node is only revealed when the agent visits it, and thus the problem has an
online nature. When the agent learns the key value of a node, it still does not know the rank of this value.

The selection problem for ordinary heaps, which allow for random access
(i.e., jumping to arbitrary nodes in the tree for ``free''), has also been
studied. In this model, it was shown by~\cite{frederickson_optimal_1993} that
selecting the $n^{\text{th}}$ minimum can be achieved deterministically in
$O(n)$ time using $O(n)$ workspace. We note that in both models, $\Omega(n)$
is a natural lower bound. This is because verifying that a value $\Lb$ is the
$n^{\text{th}}$ minimum requires $\Theta(n)$ time -- one must at least inspect the $n$
nodes with value at most $\Lb$ -- which can be done via straightforward
depth-first search.

A simple selection strategy is to use the best-first
rule\footnote{Frederickson's algorithm~\cite{frederickson_optimal_1993} is in
fact a highly optimized variant of this rule}, which repeatedly explores the
unexplored node whose parent has the smallest key value.  While this rule is
optimal in terms of the number of nodes that it explores, namely $\Theta(n)$,
the distance traveled by the agent can be far from optimal. In the
worst-case, an agent using this rule will need to travel a distance of
$\Theta(n^2)$ to find the $n^{\text{th}}$ smallest value. A simple bad example
for this rule is to consider a rooted tree consisting of two paths (which one
can extend to a binary tree), where the two paths are consecutively labeled
by all positive even and odd integers respectively. Moreover, the
space complexity becomes $\Omega(n)$ in general when using the best-first rule, because
essentially all the explored nodes might need to be kept in memory.
We note that irrespective of computational considerations on the agent, either in
terms of working memory or running time restrictions, minimizing the total
travel distance in explorable heap selection remains a challenging online
problem.

Improving on the best-first strategy, Karp, Saks and Wigderson~\cite{KSW86}
gave a randomized algorithm with expected cost $n\cdot
\exp(O(\sqrt{\log(n)}))$ using
$O(\sqrt{\log(n)})$ working space. They also showed how to make
the algorithm deterministic using $O(\log(n)^{2.5})$ space. In this work, our
main contribution is an improved randomized algorithm with expected cost
$O(n\log(n)^3)$ using $O(\log(n))$ space. Given the $\Omega(n)$
lower bound, our travel cost is optimal up to logarithmic factors. Furthermore we show that any algorithm for explorable
heap selection that uses only $s$ units of memory, must take at least $n\cdot \log_s (n)$ time in expectation. An interesting open problem is the question whether a superlinear lower bound also holds without any restriction on the memory usage.

To clarify the memory model, it is assumed that any key value and $O(\log n)$
bit integer can be stored using $O(1)$ space. We also assume that maintaining the current position in the tree does not take up memory. Furthermore, we assume
that key value comparisons and moving across an edge of the tree
require $O(1)$ time. Under these assumptions, the running times of the above
algorithms happen to be proportional to their travel cost. Throughout the
paper, we will thus use travel cost and running time interchangeably.

\paragraph{Motivation}
The motivation to look at this problem comes from the branch-and-bound algorithm. This is a well-known algorithm that can be used for solving many types of problems.
In particular, it is often used to solve integer linear programs (IPs), which are of the form $\arg\min \{ c^\T x : x\in \mathbb{Z}^n,  Ax\leq b\}$.
In that setting, branch-and-bound works by first solving the linear programming (LP) relaxation, which does not have integrality constraints. The value of the solution to the relaxation forms a lower bound on the objective value of the original problem. Moreover, if this solution only has integral components, it is also optimal for the original problem.
Otherwise, the algorithm chooses a component $x_i$ for which the solution value $\hat{x}_i$ is not integral. It then creates two new subproblems, by either adding the constraint $x_i\leq \lfloor \hat{x}_i\rfloor$ or $x_i\geq \lceil \hat{x}_i\rceil$. This operation is called \emph{branching}. The tree of subproblems, in which the children of a problem are created by the branching operation, is called the branch-and-bound tree. Because a subproblem contains more constraints than its parent, its objective value is greater or equal to the one of its parent. The algorithm can also be used to solve mixed-integer linear programs (MIPs), where some of the variables are allowed to be continuous.

At the core, the algorithm consists of two important components: the branching rule and the node selection rule.
The branching rule determines how to split up a problem into subproblems, by choosing a variable to branch on. Substantial research has been done on branching rules, see, e.g., \cite{linderoth_computational_1999-1,achterberg_branching_2005,lodi_learning_2017-1,balcan_learning_2018}.

The node selection rule decides which subproblem to solve next.
Not much theoretical research has been done on the choice of the node selection rule.
Traditionally, the best-first strategy is thought to be optimal from a theoretical perspective because this rule minimizes the number of nodes that need to be visited.
However, a disadvantage of this rule is that searches using it might use space proportional to the number of explored nodes, because all of them need to be kept in memory. In contrast to this, a simple strategy like depth-first search only needs to store the current solution. Unfortunately, performing a depth-first search can lead to an arbitrarily bad running time.
This was the original motivation for introducing the explorable heap selection problem \cite{KSW86}. By guessing the number $N$ of branch-and-bound nodes whose LP values are at most that of the optimal IP solution (which can be done via successive doubling), a search strategy for this problem can be directly interpreted as a node selection rule.
The algorithm that they introduced can therefore be used to implement branch-and-bound efficiently in only $O\left(\sqrt{\log(N)}\right)$ space.

Nowadays, computers have a lot of memory available. This usually makes it feasible to store all explored nodes of the branch-and-bound tree in memory.
However, many MIP-solvers still make use of a hybrid method that consists of both depth-first and best-first searches.
This is not only done because depth-first search uses less memory, but also because it is often faster.
Experimental studies have confirmed that the depth-first strategy is in many cases faster than best-first one \cite{clausen_best_nodate}.
This seems contradictory, because the running time of best-first search is often thought to be theoretically optimal.

In part, this contradiction can be explained by the fact that actual IP-solvers often employ complementary techniques and heuristics on top of branch-and-bound, which might benefit from depth-first searches.
Additionally, a best-first search can hop between different parts of the tree, while a depth first search subsequently explores nodes that are very close to each other.
In the latter case, the LP-solver can start from a very similar state, which is known as warm starting. This is faster for a variety of technical reasons \cite{achterberg2009}.
For example, this can be the case when the LP-solver makes use of the LU-factorization of the optimal basis matrix \cite{morrison_branch-and-bound_2016-1}.
Through the use of dynamic algorithms, computing this can be done faster if a factorization for a similar LP-basis is known \cite{suhl_fast_1993}. 
Because of its large size, MIP-solvers will often not store the LU-factorization for all nodes in the tree.
This makes it beneficial to move between similar nodes in the branch-and-bound tree.
Furthermore, moving from one part of the tree to another means that the solver needs to undo and redo many bound changes, which also takes up time.
Hence, the amount of distance traveled between nodes in the tree is a metric that influences the running time. This can also be observed when running the academic MIP-solver SCIP \cite{gleixner_personal}.

The explorable heap selection problem captures these benefits of locality by measuring the running
time in terms of the amount of travel through the tree. Therefore, we argue that this
problem is still relevant for the choice of a node selection rule, even if all nodes
can be stored in memory.

\paragraph{Related work}
The explorable heap selection problem was first introduced in \cite{KSW86}.
Their result was later applied to prove an upper bound on the parallel running time of branch-and-bound \cite{pietracaprina_space-ecient_nodate}.

When random access to the heap is provided at constant cost,
selecting the $n$'th value in the heap can be done by a deterministic algorithm
in $O(n)$ time by using an additional $O(n)$ memory for auxilliary data structures
\cite{frederickson_optimal_1993}.

The explorable heap selection problem can be thought of as a \emph{search game}
\cite{theoryofsearch} and bears some similarity to the \emph{cow path problem}.
In the cow path problem, an agent explores an unweighted unlabeled graph in
search of a target node.
The location of the target node is unknown, but when the agent visits a node
they are told whether or not that node is the target. The performance of an
algorithm is judged by the ratio of the number of visited nodes to the distance of the
target from the agent's starting point.
In both the cow path problem and the explorable heap selection problem,
the cost of backtracking and retracing paths is an important consideration.
The cow path problem on infinite $b$-ary trees was studied in \cite{goos_near_1995}
under the assumption that when present at a node the agent can obtain an estimate
on that node's distance to the target.

Other explorable graph problems exist without a target, where typically the graph itself is unknown
at the outset. There is an extensive literature on exploration both in graphs and
in the plane \cite{Berman1998,kamphans}. In some of the used models the objective is to minimize the distance traveled \cite{banerjee_graph_2023,baligacs_exploration_2023,megow_online_2012,kalyanasundaram_constructing_1994}.
Other models are about minimizing the amount of used memory \cite{diks_tree_2004}.
What distinguises the explorable heap selection problem from these problems is the information that the graph is a heap and that the ordinal of the target is known.
This can allow an algorithm to rule out certain locations for the target. Because of this additional information, the techniques used here do not seem to be applicable to these other problems.

\paragraph{Outline}
In \Cref{sec:problem} we formally introduce the explorable heap selection problem and any notation we will use. In \Cref{sec:algorithm} we introduce a new algorithm for solving this problem and provide a running time analysis. In \Cref{sec:lowerbound} we give a lower bound on the complexity of solving explorable heap selection using a limited amount of memory.

\section{The explorable heap selection problem}
\label{sec:problem}
We introduce in this section the formal model for the explorable heap selection
problem. The input to the algorithm is an infinite binary tree $T = (V,E)$,
where each node $v\in V$ has an associated real value, denoted by
$\text{val}(v) \in \mathbb{R}$. We assume that all the values are distinct.
Moreover, for each node in the tree, the values of its children are larger than
its own value. Hence, for every $v_1, v_2 \in V$ such that $v_1$ is an ancestor
of $v_2$, we have that $\text{val}(v_2) > \text{val}(v_1)$. The binary tree $T$
is thus a heap.

The algorithmic problem we are interested in is finding the $n^{\text{th}}$
smallest value in this tree. This may be seen as an online graph exploration
problem where an agent can move in the tree and learns the value of a node each
time they explore it. At each time step, the agent resides at a vertex $v \in
V$ and may decide to move to either the left child, the right child or the
parent of $v$ (if it exists, i.e. if $v$ is not the root of the tree). Each
traversal of an edge costs one unit of time, and the complexity of an algorithm
for this problem is thus measured by the total traveled distance in the binary
tree. The algorithm is also allowed to store values in memory.

We now introduce a few notations used throughout the paper.
\begin{itemize}
    \item For a node $v\in V$, also per abuse of notation written $v \in T$, we denote by $T^{(v)}$ the subtree of $T$ rooted at $v$.
    \item For a tree $T$ and a value $\Lb \in \mathbb{R}$, we define the subtree $T_{\Lb}:= \{v \in T \mid \text{val}(v) \leq \Lb\}.$
    \item We denote the $n^{\text{th}}$ smallest value in $T$ by $\opt^T(n)$. This is the quantity that we are interested in finding algorithmically.
    \item We say that a value $\mathcal{V} \in \mathbb{R}$ is \emph{good} for a tree $T$ if $\mathcal{V} \leq \opt^T(n)$ and \emph{bad} otherwise. Similarly, we call a node $v \in T$ \emph{good} if $\text{val}(v) \leq \opt^T(n)$ and \emph{bad} otherwise.
    \item We will use $[k]$ to refer to the set $\{1,\ldots, k\}$. 
    \item When we write $\log(n)$, we assume the base of the logarithm to be $2$.
\end{itemize}

For a given value $\mathcal{V} \in \mathbb{R}$, it is easy to check whether it is good in $O(n)$ time: start a depth first search at the root of the tree, turning back each time a value strictly greater than $\mathcal{V}$ is encountered. In the meantime, count the number of values below $\mathcal{V}$ found so far and stop the search if more than $n$ values are found. If the number of values below $\mathcal{V}$ found at the end of the procedure is at most $n$, then $\mathcal{V}$ is a good value. This procedure is described in more detail later in the \Call{DFS}{} subroutine.

We will often instruct the agent to move to an already discovered good vertex $v \in V$. The way this is done algorithmically is by saving $\text{val}(v)$ in memory and starting a depth first search at the root, turning back every time a value strictly bigger than $\text{val}(v)$ is encountered until finally finding $\text{val}(v)$. This takes at most $O(n)$ time, since we assume $v$ to be a good node. If we instruct the agent to go back to the root from a certain vertex $v \in V$, this is simply done by traveling back in the tree, choosing to go to the parent of the current node at each step.

In later sections, we will often say that a subroutine takes a subtree $T^{(v)}$
as input. This implicitly means that we in fact pass it $\text{val}(v)$ as
input, make the agent travel to $v \in T$ using the previously described
procedure, call the subroutine from that position in the tree, and travel back
to the original position at the end of the execution. Because the subroutine
knows the value $\text{val}(v)$ of the root of $T^{(v)}$, it can ensure it never leaves the subtree
$T^{(v)}$, thus making it possible to recurse on a subtree as if it were a
rooted tree by itself. We write the subtree $T^{(v)}$ as part of the input for
simplicity of presentation.

We will sometimes want to pick a value uniformly at random from a set of values
$\{\mathcal{V}_1, \dots, \mathcal{V}_k\}$ of unknown size that arrives in a streaming fashion,
for instance when we traverse a part of the tree $T$ by doing a depth first
search. That is, we see the value $\mathcal{V}_i$ at the $i^{\text{th}}$ time
step, but do not longer have access to it in memory once we move on to
$\mathcal{V}_{i+1}$. This can be done by generating random values $\{X_1,
\dots, X_k\}$ where, at the $i^{\text{th}}$ time step, $X_i = \mathcal{V}_i$
with probability $1/i$, and $X_i = X_{i-1}$ otherwise. It is easy to check that
$X_k$ is a uniformly distributed sample from $\{\mathcal{V}_1, \dots,
\mathcal{V}_k\}$.

\section{A new algorithm}
\label{sec:algorithm}
The authors of \cite{KSW86} presented a deterministic algorithm  that solves
the explorable heap selection problem in $n\cdot \exp(O(\sqrt{\log(n)}))$ time
and $O(n\sqrt{\log(n)})$ space. By replacing the binary search that is used in
the algorithm by a randomized variant, they are able to decrease the space
requirements. This way, they obtain a randomized algorithm with expected
running time $n\cdot \exp(O(\sqrt{\log(n)}))$ and space complexity
$O(\sqrt{\log(n)})$. Alternatively, the binary search can be implemented using
a deterministic routine by \cite{MP80} to achieve the same running time with
$O(\log(n)^{2.5})$ space.

We present a randomized algorithm with a running time $O(n\log(n)^3)$ and space
complexity $O(\log(n))$. Unlike the algorithms mentioned before, our algorithm
fundamentally relies on randomness to bound its running time. This bound only
holds when the algorithm is run on a tree with labels that are fixed before the
execution of the algorithm. That is, the tree must be generated by an adversary that is oblivious to the choices made by the algorithm. This is a stronger assumption than is needed for
the algorithm that is given in \cite{KSW86}, which also works against adaptive adversaries. An adaptive adversary is able to defer the decision of
the node label to the time that the node is explored.  Note that this distinction does not really matter for the application of the algorithm as a node selection rule in branch-and-bound, since there the node labels are fixed because they are derived from the integer program and branching rule.

\begin{theorem}
There exists a randomized algorithm that solves the explorable heap selection
problem, with expected running time $O(n\log(n)^3)$ and $O(\log(n))$
space.
\end{theorem}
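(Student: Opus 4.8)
The plan is to maintain a \emph{good lower bound} $\Lb$, that is, a value with $\Lb \le \opt^T(n)$, equivalently $|T_\Lb| \le n$, and to drive it upward until $|T_\Lb| = n$, at which point $\Lb = \opt^T(n)$ and we output it. The enabling observation is that, as noted in \Cref{sec:problem}, any single candidate value can be tested in $O(n)$ travel by a depth-first search of $T$ that turns back at every value exceeding the candidate and aborts as soon as $n+1$ nodes have been counted; since we only ever act on values that turn out good (or discard them the moment the count exceeds $n$), every such pass costs $O(n)$. The whole difficulty is therefore to choose few enough candidates, and to do so with only $O(\log n)$ memory, so that we never store the $\Theta(n)$-sized frontier of $T_\Lb$.

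First I would reduce the task to a \emph{multiplicative progress} subroutine: given the current good bound $\Lb$ with $k := |T_\Lb| < n$, produce a new good bound $\Lb' > \Lb$ whose rank is at least halfway from $k$ to $n$, i.e.\ $|T_{\Lb'}| \ge k + \lceil (n-k)/2\rceil$. Iterating this at most $O(\log n)$ times closes the gap and yields the answer, so it suffices to implement one step in $O(n \log^2 n)$ expected travel and $O(\log n)$ space. The target of one step is the node value in the band $(\Lb,\,\opt^T(n)]$ of a prescribed rank, so finding it is a selection problem over the (unstored) multiset of node values in this band.

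To solve this selection in small space I would run randomized quickselect with streaming pivots. One round performs a capped depth-first search of $T$ restricted to the current bracket and, using the $O(1)$-memory reservoir sampler of \Cref{sec:problem}, draws a uniformly random band value $p$ as pivot; a second capped search then learns $|T_p|$ and decides whether the sought rank lies below or above $p$, shrinking the bracket accordingly. Because the labels are fixed in advance (the oblivious-adversary assumption), the pivots are genuinely uniform over a fixed set, so the standard quickselect analysis bounds the expected number of rounds by $O(\log n)$, each round being two capped sweeps and hence $O(n)$ travel. Carrying this out under the memory bound forces every round to re-derive its bracket by streaming from the root and to re-navigate to good nodes, each such navigation being a further $O(n)$ pass; this is where the extra logarithmic factor enters, giving $O(n\log^2 n)$ per progress step and $O(n\log^3 n)$ overall, with space $O(\log n)$ since each round retains only a constant number of values and counters.

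The main obstacle, and the reason randomization is essential, is simultaneously controlling travel and memory. With only $O(\log n)$ space we cannot remember the frontier or any explored set, so every counting or sampling operation must be a fresh pass from the root, and we must guarantee that every tested pivot stays at or below $\opt^T(n)$ (otherwise a depth-first search could escape the $O(n)$ cap) while still forcing multiplicative progress. Reconciling ``never overshoot'' with ``halve the gap each step'' is exactly what uniform band-sampling buys us, and making the quickselect round count provably $O(\log n)$ is what pins the analysis to an oblivious adversary: an adaptive adversary could correlate the hidden labels with our sampled pivots and defeat the bound. I expect the delicate points to be the rank bookkeeping under capping—ensuring aborted searches still return correct good/bad and comparison information—and the amortization showing that the re-navigation overhead contributes only one further $\log n$ factor rather than more.
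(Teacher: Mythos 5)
There is a genuine gap at the heart of your one-step subroutine: the uniform pivot sample over the band $(\Lb,\opt^T(n)]$ cannot be produced within your stated budget. To stream the values in a bracket $(\Lb,\Ub)$ you must run a depth-first search that turns back only at values exceeding $\Ub$, i.e.\ a traversal of $T_{\Ub}$; as long as $\Ub$ is still a \emph{bad} value (and at the start of each progress step $\Ub=\infty$, or at best an untested upper bracket), $T_{\Ub}$ can be far larger than $n$, indeed unbounded. Capping that search at $n+1$ counted nodes keeps the travel at $O(n)$ but destroys uniformity: the sample is then uniform over whichever subset of the band the DFS happened to visit before aborting, a set determined by the shape of the tree rather than by the values, so the quickselect shrinkage guarantee (and even the guarantee that the bracket still contains $\opt^T(n)$) fails. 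Conversely, the only region you \emph{can} stream in $O(n)$ travel is $T_{\Lb}$ together with its frontier of children; sampling there exposes only the frontier values, and promoting one of those to the new $\Lb$ need only gain a single new good node per round, giving additive rather than multiplicative progress and an $O(n^2)$ bound. The good values realizing your halfway rank may sit arbitrarily deep below the frontier, and reaching them is itself an instance of explorable heap selection on a subtree --- which is exactly why the paper's algorithm is recursive rather than a flat quickselect.

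Concretely, the paper's $\Call{Extend}{}$ picks a uniformly random frontier root $r$ and discovers the good values \emph{below} $r$ by recursive calls on $T^{(r)}$ with geometrically doubling target ranks, stopping once a globally certified bad value is produced; the largest good and smallest bad values found under $r$ then prune, in expectation, half of the remaining roots (\Cref{lem:exp_iterations}, which plays the role of your binary search). The analytic heart is \Cref{lem:exp_sum}, which bounds the expected total size $\sum_i (n_i-k_i)$ of the recursive subcalls by $\nstar-\kstar$ via a careful case analysis on the randomly chosen root order; nothing in your proposal substitutes for this, and without the recursion the deep good nodes are simply never found. Your reservoir-sampling and oblivious-adversary observations do reappear in the paper (in $\Call{GoodValues}{}$ and in the random root selection), but there they are applied to sets the algorithm can actually afford to stream, which your band is not.
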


As mentioned above, checking whether a value $v$ is good can be done in $O(n)$ time by doing a
depth-first search with cutoff value $\val(v)$ that returns when more than $n$
good nodes are found. For a set of $k$ values, we can determine which of them
are good in $O(\log(k)n)$ time by performing a binary search.

The explorable heap selection problem can be seen as the problem of finding all
$n$ good nodes. Both our method and that of \cite{KSW86} function by first
identifying a subtree consisting of only good nodes. The children of the leaves
of this subtree are called ``roots'' and the subtree is extended by finding
a number of new good nodes under these roots in multiple rounds. Importantly, the term `good node' is always used with respect to the current call to \Call{Extend}{}.
So, a node might be good in one recursive call, but not good in another.

In \cite{KSW86} this is done by running $O(c^{\sqrt{2\log(n)}}
)$ different rounds, for some constant $c>1$. In each round, the algorithm
finds $n/c^{\sqrt{2\log(n)}}$ new good nodes. These nodes are found by
recursively exploring each active root and using binary search on the observed values
to discover which of these values are good. Which active roots are recursively explored
further depends on which values are good. The recursion in the algorithm
is at most $O(\sqrt{\log(n)})$ levels deep, which is where the space complexity
bound comes from.

In our algorithm, we take a different approach. We will call our algorithm consecutively with $n=1,2,4,8,\dots$. Hence, for a call to the algorithm, we can assume that we have already found at least $n/2$ good nodes. These nodes form a subtree of the original tree $T$.
In each round, our algorithm chooses a random root under this subtree
and finds every good node under it. It does so by doing recursive subcalls to the main algorithm on this root with values $n=1,2,4,8,\ldots$. As soon as the recursively obtained node is a bad node, the algorithm stops searching the subtree of this root, since it is guaranteed that all the good nodes there have been found. The largest good value that is found can then be used to find additional good nodes under the other roots without recursive calls, through a simple depth-first search.  Assuming that
the node values were fixed in advance, we expect this largest good value
to be greater than half of the other roots' largest good values.
Similarly, we expect its smallest bad value to be smaller than half of the other roots' smallest bad values.
By this principle, a sizeable fraction of the roots can, in expectation, be ruled out from getting a recursive call.
Each round a new random root is selected until all good nodes have been found.

This algorithm allows us to effectively perform binary search on the list of
roots, ordered by the largest good value contained in each of their subtrees
in $O(\log n)$ rounds, and the same list ordered by the smallest bad values (\Cref{lem:exp_iterations}).
Bounding the expected number of good nodes found using
recursive subcalls requires a subtle induction on two parameters (\Cref{lem:exp_sum}):
both $n$ and the number of good nodes that have been identified so far.

\subsection{Subroutines}
We first describe three subroutines that will be used in our main algorithm.

\paragraph{The procedure DFS} The procedure \Call{DFS}{} is a variant of depth
first search. The input to the procedure is $T$,
a cutoff value $\Lb \in \mathbb{R}$ and an integer $n \in
\mathbb{N}$. The procedure returns the number of vertices in $T$
whose value is at most $\Lb$.

It achieves that by exploring the tree $T$ in a depth first search manner,
starting at the root and turning back as soon as a node $w\in T$ such that
$\text{val}(w) > \Lb$ is encountered. Moreover, if the number of nodes whose
value is at most $\Lb$ exceeds $n$ during the search, the algorithm stops and
returns $n+1$.

The algorithm output is the following integer.
\[\Call{DFS}{T,\Lb,n} := \min\big\{\big|T_\Lb \big|, n+1\big\}. \]
Observe that the \Call{DFS}{} procedure allows us to check whether a node $w
\in T$ is a good node, i.e. whether $\val(w) \leq \opt^T(n)$. Indeed, $w$ is
good if and only if $\Call{DFS}{T, \val(w), n} \leq n$.

This algorithm visits only nodes in $T_{\Lb}$ or its direct descendants
and its running time is $O(n)$. The space complexity is $O(1)$, since the only values needed to be stored in memory are $\mathcal{L}$, $\text{val}{(v)}$, where $v$ is the root of the tree $T$, and a counter for the number of good values found so far.

\begin{figure}
\center
\includegraphics[width = 0.9\textwidth]{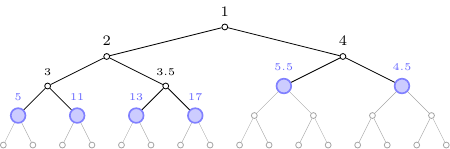}
\caption{An illustration of $R(T, \Lb_0)$ with $\Lb_0 = 4$. The number above each vertex is its value, the blue nodes are $R(T,\Lb_0)$, whereas the subtree above is $T_{\Lb_0}$.}
\label{fig_roots_pic_1}
\vspace{0.3cm}
\end{figure}

\paragraph{The procedure Roots} The procedure \Call{Roots}{} takes as input a
tree $T$ as well as an initial fixed lower bound $\Lb_0 \in
\mathbb{R}$ on the value of $\opt^T(n)$. We assume that the main algorithm has
already found all the nodes $w\in T$ satisfying $\text{val}(w) \leq
\Lb_0$. This means that the remaining values the main algorithm needs to find
in $T$ are all lying in the subtrees of the following nodes, that we call
the \emph{$\Lb_0$-roots of $T$}: \[R(T, \Lb_0) := \left\{r \in T
\setminus T_{\Lb_0} \; \big | \; r \text{ is a child of a node in }
T_{\Lb_0} \right\}\]
In other words, these are all the vertices in $T$ one level deeper in the tree than $T_{\Lb_0}$, see Figure \ref{fig_roots_pic_1} for an illustration.
In addition to that, the procedure takes two other parameters $\Lb, \Ub \in
\mathbb{R}$ as input, which correspond to (another) lower and upper bound on
the value of $\opt^T(n)$. These bounds $\Lb$ and $\Ub$ will be variables being
updated during the execution of the main algorithm, where $\Lb$ will be
increasing and $\Ub$ will be decreasing. More precisely, $\Lb$ will be the
largest value that the main algorithm has certified being at most $\opt^T(n)$,
whereas $\Ub$ will be the smallest value that the algorithm has certified being
at least that. A key observation is that these lower and upper bounds can allow
us to remove certain roots in $R(T,\Lb_0)$ from consideration, in the
sense that all the good values in that root's subtree will be certified to have
already been found. The only roots that the main algorithm needs to consider,
when $\Lb$ and $\Ub$ are given, are thus the following.

\begin{figure}
\center
\includegraphics[width = 0.9\textwidth]{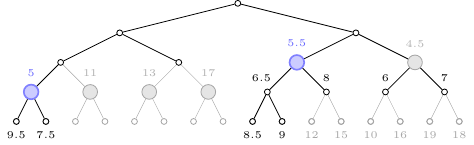}
\caption{An illustration of the \protect\Call{Roots}{} procedure with $\Lb_0 = 4, \Lb = 7 \text{ and } \Ub = 10$. Only two active roots remain, and are both colored in blue. The other roots are considered killed since all the good values have been found in their subtrees.}
\vspace{0.3cm}
\end{figure}

\begin{align}
\label{eq:roots_def}
\Call{Roots}{T,\Lb_0,\Lb,\Ub} := \left\{r \in R(T,\Lb_0) \mid \exists w \in T^{(r)} \text{ with val}(w) \in (\Lb,\Ub) \right\}
\end{align}

This subroutine can be implemented as follows. Run a depth first search starting at the root of $T$. Once a node $r \in T$ with $\text{val}(r) > \Lb_0$ is encountered, the subroutine marks that vertex $r$ as belonging to $R(T,\Lb_0)$. The depth first search continues deeper in the tree until finding a node $w \in T^{(r)}$ with $\text{val}(w) > \Lb$. At this point, if $\text{val}(w) < \Ub$, then the search directly returns to $r$ without exploring any additional nodes in $T^{(r)}$ and adds $r$ to the output. If however $\text{val}(w) \geq \Ub$, then the search continues exploring $T^{(r)}_{\Lb}$ (and its direct descendants) by trying to find a node $w$ with  $\text{val}(w) \in (\Lb,\Ub)$. In case the algorithm explores all of $T^{(r)}_{\Lb}$ with its direct descendants, and it turns out that no such node exists (i.e. every direct descendant $w$ of $T^{(r)}_{\Lb}$ satisfies $\text{val}(w) \geq \Ub$), then $r$ is not added to the output.

This procedure takes time $O(|T_\Lb|)$, i.e. proportional to the number of nodes in $T$ with value at most $\Lb$.
Since the procedure is called only on values $\Lb$ which are known to be good,
the time is bounded by $O(|T_\Lb|) = O(n)$.

In the main algorithm, we will only need this procedure in order to select a root from $\Call{Roots}{T,\Lb_0,\Lb,\Ub}$ uniformly at random, without having to store the whole list in memory. This can then be achieved in $O(1)$ space, since one then only needs to store $\text{val}(v), \Lb_0, \Lb$ and $\Ub$ in memory, where $v$ is the root of the tree $T$.

\paragraph{The procedure GoodValues} The procedure \Call{GoodValues}{} takes as input a tree $T$, a subtree $T^{(r)}$ for a node $r \in T$, a value $\Lb' \in \mathbb{R}_{\geq 0}$ and an integer $n \in \mathbb{N}$. The procedure then analyzes the set 
\[S:= \left\{\text{val}(w) \; \big| \; w \in T^{(r)}, \text{val}(w) \leq \Lb'\right\}\]
and outputs both the largest good value and the smallest bad value in that set, that we respectively call $\Lb$ and $\Ub$. If no bad values exist in $S$, the algorithm sets $\Ub = \infty$. Notice that this output determines, for each value in $S$, whether it is good or not. Indeed, any $\mathcal{V} \in S$ is good if and only if $\mathcal{V} \leq \Lb$, and is bad if and only if $\mathcal{V} \geq \Ub$.

The implementation is as follows. Start by initializing the variables $\Lb = -\infty$ and $\Ub = \Lb'$. These variables correspond to lower and upper bounds on $\opt^T(n)$. Loop through the values in 
\[S':= \left\{\text{val}(w) \mid w \in T^{(r)}, \;  \Lb < \text{val}(w) < \Ub\right\}\]
using a depth first search starting at $r$ and sample one value $\mathcal{V}$ uniformly randomly from that set. Check whether $\mathcal{V}$ is a good value by calling $\Call{DFS}{T, \mathcal{V}, n}$. If it is good, update $\Lb = \mathcal{V}$. If it is bad, update $\Ub = \mathcal{V}$. Continue this procedure until $S'$ is empty, i.e. $|S'| = 0$. If, at the end of the procedure, $\Lb = \Lb' = \Ub$, then set $\Ub = \infty$. The output is thus:
\[\Call{GoodValues}{T, T^{(r)}, \Lb',  n} := \{\Lb, \Ub\}\]
where 
\begin{align*}
\Lb &:= \max\left\{\mathcal{V} \in S \mid \mathcal{V} \leq \opt^T(n)\right\}, \\
\Ub &:= \min\left\{\mathcal{V} \in S \mid \mathcal{V} > \opt^T(n)\right\}.
 \end{align*}
 Sampling a value from $S'$ takes $O(|S|)$ time. Checking whether a sampled valued is good takes $O(n)$ time. In expectation, the number of updates before the set $S'$ is empty is $O(\log(|S|))$, leading to an expected total running time of $O((|S| + n)\log(|S|))$. As we will later see in the proof of \Cref{lemma_extend_proc}, we will only end up making calls $\Call{GoodValues}{T, T^{(r)},\Lb', n}$ with parameters $T^{(r)}$ and $\Lb'$ satisfying $\Call{DFS}{T^{(r)},\Lb'}=O(n)$. Since $|S|=\Call{DFS}{T^{(r)},\Lb'}$, this leads to an expected running time of $O(n\log(n))$.

The procedure can be implemented in $O(1)$ space, since the only values needed to be kept in memory are val($v$) (where $v$ is the root of the tree $T$), val($r$), $\Lb$, $\Ub$ and $\Lb'$, as well as the fact that every call to DFS also requires $O(1)$ space.

\subsection{The main algorithm}
We now present our main algorithm. This algorithm is named \Call{Select}{} and outputs the $n^{\text{th}}$ smallest value in the tree $T$. A procedure used in \Call{Select}{} is the \Call{Extend}{} algorithm, described below, which assumes that at least $n/2$ good nodes have already been found in the tree, and also outputs the $n^{\text{th}}$ smallest one.

\begin{algorithm}
	\caption{The \protect\Call{Select}{} procedure}
	\label{alg_select_procedure}
	\begin{algorithmic}[1]
		\State $\mathbf{Input: }$ $n \in \mathbb{N}$
		\State $\mathbf{Output: }$  $\opt(n)$, the $n^{\text{th}}$ smallest value in the heap $T$.
\Procedure{Select}{$n$}
		\State $k \gets 1$
		\State $\Lb \gets \text{val}(v)$ \Comment{$v$ is the root of the tree $T$}
\While{$k < n$}
		\If{$k < n/2$}
		\State $k' \gets 2k$
		\Else
		\State $k' \gets n$
		\EndIf
                \State $\Lb \gets \Call{Extend}{T, k', k, \Lb}$
		\State $k \gets k'$
		\EndWhile
\State \Return $\Lb$
		\EndProcedure
	\end{algorithmic}

\end{algorithm}

\begin{algorithm}
        \caption{The \protect\Call{Extend}{} procedure}
	\label{alg_extend_procedure}
	\begin{algorithmic}[1]
		\State \textbf{Input:} $T$: tree which is to be explored.
		\State \hspace{1.18cm}  $n \in \mathbb{N}$: total number of good values to be found, satisfying $n \geq 2$.
		\State \hspace{1.18cm}  $k \in \mathbb{N}$: number of good values already found, satisfying $k \geq n/2$.
		\State \hspace{1.18cm}  $\Lb_0 \in \mathbb{R}$: value satisfying $\Call{DFS}{T, \Lb_0, n} = k$.

		\State \textbf{Output: }   the $n^{\text{th}}$ smallest value in $T$. \vspace{0.2cm}

		\Procedure{Extend}{$T$, $n$, $k$, $\Lb_0$}
		\State $\Lb \gets \Lb_0$ \label{line:after_k_change}
		\State $\Ub \gets \infty$
		\While{$k<n$} \label{line:outer-while}
		\State $r\gets $ random element from \Call{Roots}{$T$, $\Lb_0$, $\Lb$, $\Ub$} \label{line:random_root}
		\vspace{0.2cm}
		\State $\Lb'\gets \max(\Lb,\text{val}(r))$
		\State $k'\gets$ \Call{DFS}{$T$, $\Lb'$, $n$} \Comment{count the number of values $\leq \Lb'$ in $T$} \label{line:first_dfs}
		\State $c\gets $ \Call{DFS}{$T^{(r)}$, $\Lb'$, $n$} \Comment{counting the number of values $\leq \Lb'$ in $T^{(r)}$} \label{line:second_dfs}
		\State $c'\gets \min(n-k'+c, 2c)$ \Comment{increase the number of values to be found in $T^{(r)}$} \label{line:first_c_prime}
		\While{$k' < n$} \Comment{loop until it is certified that $\opt^{T}(n) \leq \Lb'$}
		\State $\Lb'\gets$ \Call{Extend}{$T^{(r)}$, $c'$, $c$, $\Lb'$}  \label{line:recurse}
		\State $k'\gets$ \Call{DFS}{$T$, $\Lb'$, $n$} \label{line:third_dfs}
		\State $c\gets  c'$
		\State $c'\gets \min(n-k'+c, 2c)$ \label{line:end_inner_loop}
		\EndWhile
		\State $\tilde{\Lb}, \tilde{\Ub} \gets \Call{GoodValues}{T, T^{(r)}, \Lb', n}$ \Comment{find the good values in $T^{(r)}$} \label{line:goodvalues}
		\State $\Lb \gets \max(\Lb, \tilde{\Lb})$
		\State $\Ub \gets \min(\Ub, \tilde{\Ub})$
                \State $k\gets$ \Call{DFS}{$T$, $\Lb$, $n$} \Comment{compute the number of good values found in $T$\hspace{-1cm}} \label{line:fourth_dfs}
		\EndWhile
		\State \Return $\Lb$
		\EndProcedure
	\end{algorithmic}
\end{algorithm}

Let us describe a few invariants from the \Call{Extend}{} procedure.
\begin{itemize}
\item $\Lb$ and $\Ub$ are respectively lower and upper bounds on $\opt^{T}(n)$ during the whole execution of the procedure. More precisely, $\Lb \leq \opt^{T}(n)$ and $\Ub > \opt^{T}(n)$ at any point, and hence $\Lb$ is good and $\Ub$ is bad. The integer $k$ counts the number of values $\leq \Lb$ in the full tree $T$.
\item No root can be randomly selected twice. This is ruled out by the updated values of $\Lb$ and $\Ub$, and the proof can be found in \Cref{thm:correctness}.
\item After an iteration of the inner while loop, $\Lb'$ is set to the $c^{\text{th}}$ smallest value in $T^{(r)}$. The variable $c'$ then corresponds to the next value we would like to find in $T^{(r)}$ if we were to continue the search. Note that $c' \leq 2c$, enforcing that the recursive call to $\Call{Extend}{}$ satisfies its precondition, and that $c' \leq n - (k'-c)$ implies that $(k'-c) + c' \leq n$,
    which implies that the recursive subcall will not spend time searching for a value that is known in advance to be bad.
\item From the definition of $k'$ and $c$ one can see that $k'\geq k+ c$. Combined with the previous invariant, we see that $c'\leq n-k$.
\item $k'$ always counts the number of values $\leq \Lb'$ in the full tree $T$. It is important to observe that this is a global parameter, and does not only count values below the current root. Moreover, $k' \geq n$ implies that we can stop searching below the current root, since it is guaranteed that all good values in $T^{(r)}$ have been found, i.e., $\Lb'$ is larger than all the good values in $T^{(r)}$.
\end{itemize}
\FloatBarrier

\subsection{Proof of correctness}
\begin{theorem}\label{thm:correctness}
At the end of the execution of Algorithm \ref{alg_select_procedure}, $\Lb$ is set to the $n^{\text{th}}$ smallest value in $T$.
Moreover, the algorithm is guaranteed to terminate.
\end{theorem}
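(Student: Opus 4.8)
The plan is to prove both assertions together by strong induction on the parameter $n$, showing that for every input meeting the stated preconditions \Call{Extend}{$T,n,k,\Lb_0$} terminates and returns $\opt^T(n)$; the claim for \Call{Select}{} then follows by chaining its successive calls to \Call{Extend}{}. The backbone is a loop invariant for the outer \textbf{while} loop of \Call{Extend}{}: throughout the execution $\Lb$ is a node value with $\Lb\le\opt^T(n)<\Ub$ and $k=\Call{DFS}{T,\Lb,n}=|T_\Lb|$. These hold initially because the precondition gives $|T_{\Lb_0}|=k\le n$, so $\Lb_0=\opt^T(k)$ is good and $\Ub=\infty$. To propagate them across one outer iteration I would invoke the specified output of \Call{GoodValues}{}: since $\tilde\Lb\le\opt^T(n)<\tilde\Ub$, the updates $\Lb\gets\max(\Lb,\tilde\Lb)$ and $\Ub\gets\min(\Ub,\tilde\Ub)$ preserve $\Lb\le\opt^T(n)<\Ub$ and keep $\Lb$ a node value, and the final \Call{DFS}{} restores $k=|T_\Lb|$. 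Granting the invariants, correctness is immediate on exit: $k\ge n$ together with $\Lb\le\opt^T(n)$ (hence $|T_\Lb|\le n$) forces $|T_\Lb|=n$, and since $\Lb$ is a node value it must be the $n$-th smallest, i.e.\ $\opt^T(n)$.

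The heart of the argument is controlling the inner loop and the recursion. When the inner \textbf{while} loop is entered we have $k'<n$, so $n-k'+c>c$ and $2c>c$, whence $c'=\min(n-k'+c,2c)$ satisfies $2\le c'\le 2c$ and the assignment $c\gets c'$ strictly increases $c$. The crucial accounting is that $r\in R(T,\Lb_0)$ lies strictly below the disjoint block $T_{\Lb_0}$: every one of the $k\ge n/2$ nodes of $T_{\Lb_0}$ has value $\le\Lb_0\le\Lb'$ yet lies outside $T^{(r)}$, so $k'-c\ge k\ge n/2$ and therefore $c'\le n-(k'-c)\le\lfloor n/2\rfloor<n$. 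Thus each recursive call \Call{Extend}{$T^{(r)},c',c,\Lb'$} has a strictly smaller first argument in $[2,\lfloor n/2\rfloor]$ and meets the remaining preconditions (since $c'\le 2c$ and $\Call{DFS}{T^{(r)},\Lb',c'}=\min(c,c'+1)=c$), so the induction hypothesis applies: it terminates and sets $\Lb'=\opt^{T^{(r)}}(c')$. As $c'$ strictly increases and, once it exceeds the number of good nodes in $T^{(r)}$, the value $\Lb'$ becomes bad and forces $|T_{\Lb'}|>n$, the inner loop halts after finitely many rounds; in particular $\Lb'\ge\opt^T(n)$ on exit, so the set $S$ examined by \Call{GoodValues}{} contains every good value of $T^{(r)}$.

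The delicate part—and the step I expect to be the main obstacle—is terminating the outer loop, i.e.\ proving no root is selected twice. Here I would split on the exit configuration. If $\Lb'=\opt^T(n)$ then the subsequent \Call{DFS}{} sets $k=n$ and the outer loop ends, so reselection is moot. Otherwise $\Lb'>\opt^T(n)$, and then $\tilde\Lb,\tilde\Ub$ are exactly the largest good and the smallest bad value of the \emph{whole} subtree $T^{(r)}$ (the smallest bad value is $\le\Lb'$, hence lies in $S$); consequently the tightened interval $(\Lb,\Ub)\subseteq(\tilde\Lb,\tilde\Ub)$ contains no value of $T^{(r)}$, so $r\notin\Call{Roots}{T,\Lb_0,\Lb,\Ub}$. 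Since $\Lb$ only increases and $\Ub$ only decreases, $\Call{Roots}{T,\Lb_0,\Lb,\Ub}$ shrinks monotonically and loses the selected element each round, and $R(T,\Lb_0)$ is finite (at most $2k$ children of $T_{\Lb_0}$), so the loop halts. I would also note that \Call{GoodValues}{} itself terminates because each sampled value strictly shrinks its working interval and hence the streamed set $S'$.

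Assembling these pieces closes the induction, establishing correctness and termination of \Call{Extend}{} for all $n$. For \Call{Select}{} I would then verify, by induction over its own \textbf{while} loop, that every call to \Call{Extend}{} meets its precondition: the doubling rule gives $k\ge k'/2$ (and $k\ge n/2$ in the final call with $k'=n$), the base case $k=1,\ \Lb=\val(v)$ is supplied by the root with $|T_{\val(v)}|=1$, and the invariant that \Call{Extend}{} returns $\opt^T(k')$ keeps $\Call{DFS}{T,\Lb,n}=k$ after $k\gets k'$. As $k$ at least doubles until it reaches $n$, the loop runs $O(\log n)$ times and returns $\Lb=\opt^T(n)$, completing the proof.
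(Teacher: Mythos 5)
Your proof is correct and follows essentially the same route as the paper's: the same invariants $\Lb\le\opt^T(n)<\Ub$ with the outer loop's exit condition forcing equality, termination of the inner loop via the strictly increasing counters $c'$ and $k'$, bounded recursion depth from $c'\le n/2$, and termination of the outer loop by showing a selected root can never reappear in \Call{Roots}{}. The only difference is presentational: you organize the argument as an explicit strong induction on $n$ and spell out edge cases (e.g.\ exactly why the updated $(\Lb,\Ub)$ excludes all of $T^{(r)}$) that the paper's terser proof leaves implicit.
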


\begin{proof}
We show $\Lb = \opt^{T}(n)$ holds at the end of Algorithm \ref{alg_extend_procedure}, i.e. the \Call{Extend}{} procedure. Correctness of Algorithm \ref{alg_select_procedure}, i.e. the \Call{Select}{} procedure, then clearly follows from that. First, notice that $\Lb$ is always set to the first output of the procedure \Call{GoodValues}{}, which is always the value of a good node in $T$, implying
\[\Lb \leq \opt^{T}(n)\]
at any point during the execution of the algorithm.
Since the outer while loop ends when at least $n$ good nodes in $T$ have value at most $\Lb$, we get
\[\Lb \geq \opt^{T}(n),\]
which implies that when the algorithm terminates it does so with the correct value.

It remains to prove that the algorithm terminates.
We observe that every recursive call $\Lb' \gets \Call{Extend}{T^{(r)},c',c,\Lb'}$ strictly
increases the value of $\Lb'$, ensuring that at least one extra value in $T$ is under
the increased value. This implies that $k'$ strictly increases every iteration of the
inner while loop, thus ensuring that this loop terminates.

To see that the outer loop terminates, observe that after each iteration the set \Call{Roots}{$T, \Lb_0, \Lb, \Ub$} shrinks by at least one element. As soon as this set is empty, there will be no more roots with unexplored good values in their subtrees, so $k=n$ and the algorithm terminates.

\end{proof}

\subsection{Running time analysis}
In order to prove a $O(n \log(n^3))$ running time bound for the $\Call{Select}{n}$ procedure, we will show that the running time of the $\Call{Extend}{}$ procedure with parameters $n$ and $k$ is $O((n-k)\log(n)^3) + O(n\log(n)^2)$.

The main challenge in analyzing the running time of $\Call{Extend}{}$ is in dealing with the cost of the recursive subcalls on line~\ref{line:recurse}.
For this we rely on an important idea, formalized in Lemma \ref{lem:exp_sum}, stating that if the parent call with parameters $n$ and $k$ makes $z \in \mathbb{N}$ recursive calls with parameters $(n_1,k_1),\ldots, (n_z,k_z)$, then $\sum_{i=1}^z (n_i-k_i)\leq n-k$ in expectation over the random choices of the algorithm.

A second insight is that the outermost while loop on line~\ref{line:outer-while} is executed at most $O(\log(n))$ times in expectation, which is shown in Lemma \ref{lem:exp_iterations}. The first lemma allows to show that the running time of the $\Call{Extend}{}$ procedure on the recursive part is $O((n-k)\log(n)^3)$, through an induction proof. The second lemma helps to show that the running time of the $\Call{Extend}{}$ procedure on the non-recursive part is $O(n \log(n)^2)$. The running time analysis of $\Call{Extend}{}$ is formally done in Lemma \ref{lemma_extend_proc}. Finally, the running time of $O(n \log(n^3))$ for the $\Call{Select}{n}$ procedure then follows in Theorem \ref{thm:select}.

Let us now prove these claims. We first show that the expectation of $\sum_{i=1}^z(n_i-k_i)$ is bounded by $n-k$.

\begin{lemma}
		\label{lem:exp_sum}
		Let $z$ be the number of recursive calls with $k\geq 1$ that are done in the main loop of \Call{Extend}{$T$, $n$, $k$, $\Lb_0$}. For every $i\in [z]$, let $n_i$ and $k_i$ be the values that are given as second and third parameters to the $i$th such subcall. It holds that:
		\begin{align*}
			\Exp\left[\sum_{i=1}^z (n_i-k_i)\right]\leq n-k.
		\end{align*}
\end{lemma}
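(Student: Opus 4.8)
The plan is to reduce the global bound to a single-step inequality and then induct. First I would record the contribution of one outer iteration of \Call{Extend}{}. Writing $c_0$ for the value of $c$ computed on line~\ref{line:second_dfs} and $c'_0,c'_1,\dots$ for the successive values of $c'$, the $i$-th recursive call of that iteration has parameters $(n_i,k_i)=(c'_{i-1},c'_{i-2})$ (with the convention $c'_{-1}=c_0$), so $n_i-k_i=c'_{i-1}-c'_{i-2}$ telescopes and the iteration contributes exactly $c'_{\mathrm{last}}-c_0$: the number of nodes of $T^{(r)}$ in the window $(\max(\Lb,\val(r)),\Lb'_{\mathrm{final}}]$, where $\Lb'_{\mathrm{final}}$ is the last threshold reached. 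Thus $\sum_{i=1}^z(n_i-k_i)=\sum_{\text{outer iters}}X$, where $X$ is this per-iteration count of newly discovered nodes (genuinely good ones together with overshoot) inside the selected root's subtree.

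Next I would induct on the number of active roots $s=|\Call{Roots}{T,\Lb_0,\Lb,\Ub}|$ at the start of the call, with induction hypothesis that the expected sum is at most $n-k$. In the base case $s=1$ all unfound good values sit in the single active subtree, the external count $k'-c$ is constant, and the rule $c'=\min(n-k'+c,2c)$ forces the budget term to bind exactly when the global count reaches $n$; the iteration then stops precisely at $\opt^T(n)$ with no overshoot and cost exactly $n-k$. For the inductive step, by \Cref{thm:correctness} the selected root is permanently removed from \Call{Roots}{}, so after the first outer iteration the continuation is dominated by an \Call{Extend}{} instance with strictly fewer active roots (tighter $\Ub$ only removes roots) and updated count $k_1$, whence the induction hypothesis bounds its expected cost by $n-k_1$. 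Taking expectations, the lemma follows once I establish the one-step inequality $\Exp_r[X]\le\Exp_r[\Delta k]$, where $\Delta k=k_1-k$ is the increase of the global counter produced by processing the random root $r$.

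This one-step inequality is the heart of the argument, and it is where randomness is indispensable (it fails pointwise). I would split both sides at $\opt^T(n)$: writing $g_r$ for the newly found good values of $T^{(r)}$, $B_r$ for the overshoot (bad values of $T^{(r)}$ that are found), and $F_r$ for the good values of the other active subtrees that drop below the raised lower bound $\tilde\Lb=\max\{\val(w):w\in T^{(r)},\ \val(w)\le\opt^T(n)\}$, one gets $X=g_r+B_r$ and $\Delta k=g_r+F_r$, so it suffices to show $\Exp_r[B_r]\le\Exp_r[F_r]$. Two structural facts control the overshoot: the inner loop halts the instant $\Lb'$ exceeds $\opt^T(n)$, since then $\Call{DFS}{T,\Lb',n}=n+1\ge n$, so $B_r$ is confined to a single doubling block above $\opt^T(n)$; and by the budget term $c'\le n-k'+c$ this block is nonempty only when unfound good values remain outside $T^{(r)}$ — exactly the situation in which raising $\Lb$ to $\tilde\Lb$ frees good values in the competing subtrees.

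The hard part will be converting these facts into the expectation inequality, because $B_r$ depends on the doubling schedule in a delicate way, and a crude bound such as $B_r\le g_r$ is too lossy (it fails when the good values concentrate in one root). I expect to argue in the spirit of the binary-search analysis behind \Cref{lem:exp_iterations}: ordering the active roots by their largest good value $a_\rho$ (and symmetrically by their smallest bad value, which governs $\Ub$), a uniformly random root behaves like a random order statistic, so the good values freed elsewhere dominate the single overshoot block in expectation. Averaging $\tfrac1s\sum_\rho(F_\rho-B_\rho)\ge 0$ over the root ordering, combined with the single-block bound on each $B_\rho$, should give $\Exp_r[B_r]\le\Exp_r[F_r]$ and close the induction. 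The main obstacle is precisely this matching of overshoot against freed good values; everything else is bookkeeping.
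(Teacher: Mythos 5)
Your opening moves match the paper's: the per-root/per-iteration telescoping of $\sum_i (n_i-k_i)$ and the decomposition of each outer iteration's contribution into newly found good values plus overshoot are both correct and are exactly how the paper's proof begins. The gap is the reduction to the one-step inequality $\Exp_r[X]\le\Exp_r[\Delta k]$ (equivalently $\Exp_r[B_r]\le\Exp_r[F_r]$) conditioned on the history: this inequality is \emph{false}, not merely hard to prove. The problematic configuration is a root $t$ all of whose good values already lie below the current $\Lb$ (because $\Lb$ was raised while processing a different root) but which remains active because its smallest bad value is still below $\Ub$. When such a root is selected, $\Lb'=\Lb$, the inner loop runs, and the recursive call necessarily returns a bad value, so $B_t\ge 1$ while $\tilde\Lb\le\Lb$ and $\Delta k=0$; no good values are freed anywhere, in that iteration or any later one. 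Concretely, take roots $r_1,\dots,r_{m-1}$, where $T^{(r_j)}$ contains the single good value $w_j=\val(r_j)$ followed by a bad value $v_j$ with $v_1<\dots<v_{m-1}$ smaller than all other bad values, and a root $r_m=d_p$ containing the remaining good values $w_m<\dots<w_p$. With $m=4$, if $r_3$ is selected in the first iteration, the active set becomes $\{r_1,r_2,r_4\}$ with $\Lb=w_3$ and $\Ub = v_3$; in the next iteration selecting $r_4$ gives $X-\Delta k=-1$ while selecting $r_1$ or $r_2$ gives $X-\Delta k=+1$, so $\Exp_r[X-\Delta k]=\tfrac13>0$. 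Since your induction on the number of active roots needs the one-step inequality at every reachable state, the inductive step does not close (and indeed at this state it would only yield a bound of $(n-k)+\tfrac13$).

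The underlying issue is that your accounting is per iteration, whereas the compensation for a root's wasted work is not available in the iteration where that work occurs: the good values of $T^{(r_t)}$ that could pay for $B_t$ were already absorbed into the $\Delta k$ of an earlier iteration. The paper instead charges per root and globally in time. It sets $s_t=\sum_i \mathbf{1}_{r_i=t}(n_i-k_i)$ and shows $\Exp[s_t]\le|S_t|$ for each root separately: for $t\ne d_p$, with probability at least $\tfrac12$ the root $d_p$ is selected before $t$, in which case the whole algorithm terminates and $s_t=0$; otherwise the doubling cap gives $s_t\le n_{b_t}\le 2k_{b_t}\le 2|S_t|$, so the average is at most $|S_t|$. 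The event ``$d_p$ comes first, so you never pay for $t$ at all'' concerns the entire remaining run and cannot be localized to a single step, which is why the supermartingale-style one-step argument you propose (and correctly flag as the unresolved core) cannot be repaired without switching to this per-root charging. The root $d_p$ itself then requires the separate case analysis via $j^*=\max\{j: d_j\ne d_p\}$ together with the budget cap $c'\le n-k'+c$, which is the other piece your sketch leaves open.
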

\begin{proof}
For simplicity of notation, let us denote the set of roots at the beginning of the execution of the algorithm by
$\mathcal{R} := \Call{Roots}{T, \Lb_0, \Lb, \Ub}$,
where $\Lb = \Lb_0$ and $\Ub = \infty$ at initialization. An important observation is that, once a root $r \in \mathcal{R}$ is randomly selected on line~\ref{line:random_root}, all the recursive calls under it (i.e. with its subtree $T^{(r)}$ as first parameter) on line~\ref{line:recurse} are consecutive. The last such recursive call ensures that all the good values in $T^{(r)}$ are found and sets $\Lb$ and $\Ub$ to respectively be the largest good value and smallest bad value in $T^{(r)}$. From then on, this root leaves the updated set $\Call{Roots}{T, \Lb_0, \Lb, \Ub}$  by \eqref{eq:roots_def} and will thus never be again considered in the random choice on line~\ref{line:random_root}. For every $r \in \mathcal{R}$, let us define the set:
\begin{align*}
C(r) = \Big\{i \in [z] \text{ s.t. the }i\text{th recursive call is under root $r$} \Big\}
\end{align*}
and let us denote by $S_r \in \mathbb{N}$ the total number of good values in its subtree $T^{(r)}$. Our goal is to show that:
\begin{align}
\label{eq_goal}
\Exp\Big[\sum_{i \in C(r)} (n_i-k_i)\Big]\leq S_r \qquad \forall r \in \mathcal{R}.
\end{align}
Clearly, this would imply the lemma, since the total number of good values to be found is $\sum_{r \in \mathcal{R}}S_r = n-k$. For convenience, we define this number to be $p: = n-k$. We now order the good values to be found and denote them as follows: $\mathcal{V}_1 < \mathcal{V}_2 < \dots < \mathcal{V}_p$. Each value $\mathcal{V}_k$ is to be found in the subtree of a certain root that we denote by $r(\mathcal{V}_k) \in \mathcal{R}$.

We first show that the claim \eqref{eq_goal} holds for any root $r \in \mathcal{R}$ such that $r \neq r(\mathcal{V}_p)$. Let us thus fix such a root $r \neq r(\mathcal{V}_p)$. The key observation is that, since the random choice on line~\ref{line:random_root} is uniform, and since $r(\mathcal{V}_p)$ will always be among the active roots, the subtree of the root $r(\mathcal{V}_p)$ will be explored before the subtree of root $r$ with probability at least a half. In that case, no recursive calls will be made under root $r$. This holds since the updated values $\Lb$ and $\Ub$ after the iteration of $r(\mathcal{V}_p)$ ensure that $r$ leaves $\Call{Roots}{T, \Lb_0, \Lb, \Ub}$ by $\eqref{eq:roots_def}$ and is thus not considered in the random choice in later iterations. If the root $r$ is however considered before $r(\mathcal{V}_p)$, which happens with probability at most a half, then $\sum_{i \in C(r)} (n_i-k_i) \leq 2 S_r$, since the sum is telescoping and the parameters $k_i$ and $n_i$ at most double at each step on line~\ref{line:end_inner_loop} until all good values in $T^{(r)}$ are found. Hence, we get that
\begin{align}
\Exp\Big[\sum_{i \in C(r)} (n_i-k_i)\Big] \leq \frac{1}{2} \; 0 + \frac{1}{2} \; 2 S_r \leq S_r.
\end{align}

It remains to show that claim \eqref{eq_goal} holds for the root $r(\mathcal{V}_p)$ under which the largest good value lies. In that case, let us denote by $\mathcal{V}_j$ the largest good value lying in a subtree of a different root $r(\mathcal{V}_j) \neq r(\mathcal{V}_p)$. We also denote by $\{r(\mathcal{V}_j) \prec r(\mathcal{V}_p)\}$ the probabilistic event that $r(\mathcal{V}_j)$ is considered before $r(\mathcal{V}_p)$ in the random choices of the algorithm. By our choice of $\mathcal{V}_j$ and $\mathcal{V}_p$, this event happens with probability exactly a half. Moreover, if this event happens, all the good values outside of $T^{(r(\mathcal{V}_p))}$ will have been found after exploring $T^{(r(\mathcal{V}_j))}$. This means that, when the algorithm considers $r(\mathcal{V}_p)$, it knows that there remain at most $p-j$ values to be found. That is, we will have $C(r(\mathcal{V}_p))=\{t,\ldots, z\}$ for some $t$, such that $k_t\geq S_{r(\mathcal{V}_p)}-(p-j)$ and $n_z \leq S_{r(\mathcal{V}_p)}$, leading to
\begin{align}
\label{eq_last_root_condition}
\Exp\Big[\sum_{i \in C(r(\mathcal{V}_p))} (n_i-k_i) \; \big| \;  r(\mathcal{V}_j) \prec r(\mathcal{V}_p) \Big] \leq S_{r(\mathcal{V}_p)} - \left(S_{r(\mathcal{V}_p)}-(p-j)\right) = p - j,
\end{align}
where we have again used the fact that the sum is telescoping.

We now consider the event $\{r(\mathcal{V}_p) \prec r(\mathcal{V}_j)\}$ and distinguish two cases. Suppose that the penultimate call $i \in C(r(\mathcal{V}_p))$ finds a good value which is bigger than $\mathcal{V}_{j}$. By a similar argument as above, the algorithm does not double in the last step, but truncates due to line~\ref{line:end_inner_loop}, meaning that $\sum_{i \in C(r(\mathcal{V}_p))} (n_i-k_i) = S_{r(\mathcal{V}_p)}$ holds in this case. Combining this with \eqref{eq_last_root_condition} and using the fact that the last $p-j$ values are under root $r(\mathcal{V}_p)$, we get
\[\Exp\Big[\sum_{i \in C(r(\mathcal{V}_p))} (n_i-k_i)\Big] \leq \frac{1}{2}(p-j) + \frac{1}{2}S_{r(\mathcal{V}_p)} \leq S_{r(\mathcal{V}_p)}. \]
Suppose now that the penultimate call $i \in C(r(\mathcal{V}_p))$ finds a good value which is smaller than $\mathcal{V}_{j}$. This means that the number of good values found in $T^{(r(\mathcal{V}_p))}$ is at most $S_{r(\mathcal{V}_p)} - (p-j)$ at that point. The last call $i \in C(r(\mathcal{V}_p))$ then doubles the parameters, meaning that $\sum_{i \in C(r(\mathcal{V}_p))} (n_i-k_i) \leq 2 \: (S_{r(\mathcal{V}_p)} - (p-j) )$ holds, due to the fact that the sum is telescoping. Combining this with \eqref{eq_last_root_condition} leads to
\[\Exp\Big[\sum_{i \in C(r(\mathcal{V}_p))} (n_i-k_i)\Big] \leq \frac{1}{2}(p-j) + S_{r(\mathcal{V}_p)} - (p-j) \leq S_{r(\mathcal{V}_p)}.\]
\end{proof}

We now bound the expected number of iterations of the outermost while-loop.
\begin{lemma}
		\label{lem:exp_iterations}
	The expected number of times that the outermost while-loop (at line \ref{line:outer-while}) is executed by the procedure \Call{Extend}{} is at most $O(\log (n))$.
\end{lemma}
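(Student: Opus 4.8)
The plan is to analyze the outermost while loop by tracking how the set of active roots shrinks over iterations, and to show that each iteration, in expectation, eliminates a constant fraction of the currently active roots. Since we begin with $O(n)$ roots and the loop terminates when at most one active root remains, a geometric decay argument would then give the $O(\log n)$ bound on the expected number of iterations. The key quantity to control is $s = |\Call{Roots}{T,\Lb_0,\Lb,\Ub}|$, the number of roots $r$ whose subtree still contains an undiscovered value in the open interval $(\Lb, \Ub)$.

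\textbf{First} I would fix, for the duration of the analysis, an ordering of the active roots by the largest good value contained in each of their subtrees. Because the tree labels are fixed in advance (the oblivious adversary assumption), this ordering is determined before the algorithm runs, even though the algorithm does not know it. The crucial observation from the algorithm's structure is that when a root $r$ is selected on line~\ref{line:random_root}, the \Call{Extend}{} subcalls and the subsequent \Call{GoodValues}{} call together certify the \emph{exact} largest good value $\tilde{\Lb}$ and smallest bad value $\tilde{\Ub}$ inside $T^{(r)}$, after which $\Lb$ is raised to $\max(\Lb,\tilde\Lb)$ and $\Ub$ lowered to $\min(\Ub,\tilde\Ub)$. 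Consequently, every other active root whose largest-good-value lies below the newly raised $\Lb$, and every root whose smallest-bad-value lies above the newly lowered $\Ub$, gets eliminated from \Call{Roots}{} on the next evaluation.

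\textbf{The main step} is then a median-splitting argument, mirroring the two-sided binary search suggested in the outline. Since the selected root $r$ is chosen uniformly at random among the $s$ currently active roots, and since raising $\Lb$ to the largest good value found under $r$ kills every active root whose largest good value is smaller, the expected rank of $r$ in the good-value ordering splits the surviving roots roughly in half: with probability $1/2$ the chosen root lies in the upper half of this ordering, eliminating at least $s/2$ roots on the $\Lb$ side. An entirely symmetric argument applied to the ordering by smallest bad value, using the update to $\Ub$, handles the other side. One must be slightly careful that a single root can simultaneously serve as the splitter for both orderings, but since we only need a constant-fraction reduction it suffices to argue that in expectation the number of surviving active roots is at most a constant fraction, say $3/4$ or $1/2$, of $s$.

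\textbf{The hard part} will be making the median argument fully rigorous in the presence of the two interacting bounds $\Lb$ and $\Ub$: the elimination of a root depends on whether its \emph{entire} good/bad structure falls outside $(\Lb,\Ub)$, so I must verify that the \Call{GoodValues}{} output genuinely pins down both the largest good and the smallest bad value in $T^{(r)}$, and that these suffice to rule the root out. I would therefore structure the proof as follows: first establish that after processing root $r$ the interval $(\Lb,\Ub)$ no longer intersects the values of $T^{(r)}$ that determine its activity; second, define a potential equal to the number of active roots and show $\Exp[s_{\text{new}} \mid s] \leq c\, s$ for some constant $c < 1$ using the uniform-random choice and the order statistics; and finally conclude via a standard expected-halving argument (each iteration contracts the active-root count by a constant factor in expectation, so $O(\log n)$ iterations suffice in expectation to reach a single root, at which point the loop exits). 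I expect the bookkeeping around which roots count as ``active'' after a given update to be the most delicate point, rather than the probabilistic estimate itself.
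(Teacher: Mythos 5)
There is a genuine gap at the heart of your main step. You want to show that the total number of active roots $s=|\Call{Roots}{T,\Lb_0,\Lb,\Ub}|$ contracts by a constant factor in expectation in each iteration, i.e.\ $\Exp[s_{\mathrm{new}}\mid s]\le c\,s$ for a constant $c<1$. This is false. Writing $\ell_r$ and $u_r$ for the largest good and smallest bad value under a root $r$, the active set is the \emph{union} $A_\ell\cup A_u$ with $A_\ell=\{r:\ell_r>\Lb\}$ and $A_u=\{r:u_r<\Ub\}$; raising $\Lb$ only removes roots from $A_\ell$, lowering $\Ub$ only removes roots from $A_u$, and a root survives in the union if it survives on either side. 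Now let all $s$ roots lie in both sets and let the adversary anti-correlate the two orderings, so that the root with the $j$-th largest $\ell$-value has the $j$-th smallest $u$-value. Selecting the root of $\ell$-rank $j$ sets $\Lb$ to $\ell_j$ and $\Ub$ to $u_j$, which removes roots $1,\dots,j$ from $A_\ell$ and roots $j,\dots,s$ from $A_u$; the union of survivors is every root except $j$ itself. Thus exactly one root leaves the active set no matter which root is drawn, and no constant-factor contraction of the union holds. The sentence where you say one must be ``slightly careful'' about a single root serving as splitter for both orderings is precisely the point where the argument breaks. (A secondary imprecision: conditioned on the drawn root lying in, say, $A_\ell\setminus A_u$, it is uniform over that difference, not over $A_\ell$, so ``upper half of the good-value ordering'' does not translate into eliminating half of the active set.)

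The repair, which is the route the paper takes, is to never argue about the union: track $|A_\ell|$ and $|A_u|$ as two separate potentials. Conditioned on the selected root landing in $A_\ell$, it is uniform over $A_\ell$, so $\Exp\bigl[|A_\ell^{\mathrm{new}}|\bigr]\le \tfrac12|A_\ell|$; symmetrically for $A_u$. Each potential can absorb only $O(\log m)$ such halvings (formalized via a supermartingale on $\max(\log|A_\ell|,0)$, Jensen's inequality, and the optional stopping theorem), and every iteration is charged to at least one of the two sides since the drawn root lies in $A_\ell\cup A_u$. This yields at most $2\log m+2$ expected iterations. Note that in the counterexample above this accounting still works: after one step both $|A_\ell|$ and $|A_u|$ have halved in expectation even though their union lost a single element, which is exactly why the two-potential bookkeeping succeeds where the union-potential argument fails.
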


\begin{proof}
Let $r_1,\ldots, r_m$ denote the roots returned by $\Call{Roots}{T, \Lb_0, \Lb_0, \infty}$. For $j\in [m]$, let $\ell_j$ and $u_j$ respectively denote the largest good value and the smallest non-good value under root $r_j$. Let $A_\ell(\Lb):= \{r_j: \ell_j > \Lb \}$ and $A_u(\Ub):=\{r_j: u_j < \Ub \}$. Observe that $\Call{Roots}{T, \Lb_0, \Lb, \Ub}=A_\ell(\Lb)\cup A_u(\Ub)$ for any $\Lb \leq \Ub$.

Let $\Lb_i$ and $\Ub_i$ denote the values of $\Lb$ and $\Ub$ at the start of the $i$th iteration. After an iteration $i$ in which root $r_j$ was selected, the algorithm updates $\Lb$ and $\Ub$ such that $\Lb_{i+1}=\max(\Lb, \ell_j)$ and $\Ub_{i+1}=\min(\Ub, u_j)$. Observe that $\Lb_i$ is nondecreasing and that $\Ub_i$ is nonincreasing.

We will now show that if a root from $A_\ell(\Lb_i)$ is selected in iteration $i$, then the expected size of $A_\ell(\Lb_{i+1})$ is at most half that of $A_\ell(\Lb_i)$. This will imply that in expectation only $\log(n)$ iterations are needed to make $|A_\ell(\Lb)|=1$.

Let $\mathcal{F}_i$ be the filtration containing all information up until iteration $i$.
Let $X_i$ be a random variable denoting the value of $|A_\ell(\Lb_i)|$. Let $(s_k)_{k \geq 1}$ be the subsequence consisting of iteration indices $i$ in which a root from $A_\ell(\Lb_i)$ is selected. Because roots are selected uniformly at random, we have $\Exp[X_{s_{k+1}} \mid \mathcal{F}_{s_{k}}]\leq \frac12 X_{s_k}$.

Let $Y_i=\max(\log(X_i), 0)$. Note that when $Y_{s_{k}}\geq 1$, we have $\Exp[Y_{s_{k+1}} \mid \mathcal{F}_{s_{k}}]=\Exp[\log(X_{s_{k+1}}) \mid \mathcal{F}_{s_{k}}]\leq \log(\Exp[X_{s_{k+1}}\mid \mathcal{F}_{s_{k}}])\leq Y_{s_{k}} -1$. Let $\tau$ be the smallest $k$ such that $Y_{s_{k}}=0$. Note that $\tau$ is the number of iterations $i$ in which a root from $A_\ell(\Lb_i)$ is selected, and hence $\tau \leq n$. The sequence $(Y_{s_k} + k)_{k=1,\ldots, \tau}$ is therefore a supermartingale and $\tau$ is a stopping time.  By the martingale stopping theorem \cite[Theorem 12.2]{mitzenmacher_probability_2005-2}, we have $\Exp[\tau]=\Exp[Y_{s_\tau} +\tau ]\leq \Exp[Y_{s_1}+1]=\log(m)+1$.

Now we have shown that in expectation at most $\log(m)+1$ iterations $i$ are needed in which roots from $A_\ell(\Lb_i)$ are considered.
The same argument can be repeated for  $A_u(\Ub)$.
Since in every iteration a root from $A_\ell(\Lb)$ or $A_u(\Ub)$ is selected, the expected total number of iterations is at most $2\log(m)+2$. This directly implies the lemma as $m\leq |T_\Lb| + 1 \leq n + 1$.
\end{proof}
We are now able to prove the running time bound for the \Call{Extend}{} procedure.
\begin{lemma}
\label{lemma_extend_proc}
Let $R(T, n,k)$ denote the running time of a call \Call{Extend}{$T$, $n$, $k$, $\Lb_0$}. Then
there exists $C > 0$ such that
\begin{align*}
\Exp[R(T, n,k)]\leq 5C(n-k)\log(n)^3+Cn\log(n)^2.
\end{align*}
\end{lemma}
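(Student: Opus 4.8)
The plan is to split the cost of a single call as $R(T,n,k) = N(n,k) + \sum_{i=1}^{z} R(T^{(r_i)}, n_i, k_i)$, where $N(n,k)$ is the \emph{non-recursive} cost (everything the call does except the recursive invocations on line~\ref{line:recurse}), and to close the stated bound by strong induction on $n$, using \Cref{lem:exp_sum} and \Cref{lem:exp_iterations} to control the subcalls.

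First I would bound $N(n,k)$. By \Cref{lem:exp_iterations} the outer loop runs $O(\log n)$ times in expectation. In a single outer iteration, \Call{Roots}{} and all the \Call{DFS}{} calls cost $O(n)$ each and \Call{GoodValues}{} costs $O(n\log n)$ in expectation; the inner loop does one $O(n)$ call to \Call{DFS}{} per iteration, and since $c' = \min(n-k'+c,2c)$ forces $c$ to at least double on every inner iteration except the last (the capped case $c' = n-k'+c$ pushes $k'$ to $\ge n$ and ends the loop), there are only $O(\log n)$ inner iterations. So one outer iteration costs $O(n\log n)$ in expectation, and combining the random iteration count with the random per-iteration cost by a Wald-type identity — the event ``loop still running before iteration $j$'' is measurable with respect to the history, so $\Exp[\sum_j \mathrm{cost}_j] \le \Exp[\#\text{iterations}]\cdot\sup_j \Exp[\mathrm{cost}_j \mid \text{history}]$ — gives $\Exp[N(n,k)] \le Cn\log^2 n$ for a suitable $C$.

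Next I would record the structural facts that drive the induction. Each subcall satisfies $n_i = c' \le n-k \le n/2$: we have $c' \le n-k'+c$, and the $k'$ values counted there include the $c$ values inside $T^{(r)}$ and at least the $k$ values of $T_{\Lb_0}$ (disjoint from $T^{(r)}$ as $\val(r) > \Lb_0$), so $c' \le n-(c+k)+c = n-k$, which is $\le n/2$ by the precondition $k \ge n/2$. This bounds the recursion depth by $O(\log n)$ and gives $\log n_i \le \log n - 1$. Taking expectations, applying the inductive hypothesis to each subcall (valid by conditioning on the history, since a subcall's running time depends only on its own parameters and fresh independent randomness), pulling out $\log n_i \le \log n - 1$, and bounding the two sums — \Cref{lem:exp_sum} gives $\Exp[\sum_i (n_i-k_i)] \le n-k$ with the tight constant $1$, while $\Exp[\sum_i n_i] = O(n-k)$ follows because the $n_i$ of any fixed root at least double between consecutive subcalls, so their sum is within a constant of the last term $n_{b_t} \le 2k_{b_t} \le 2|S_t|$, which sums to $O(n-k)$ over roots in the notation of \Cref{lem:exp_sum} — I obtain
\begin{align*}
\Exp[R(T,n,k)] \le Cn\log^2 n + 5C(n-k)(\log n - 1)^3 + c_0 C(n-k)(\log n - 1)^2
\end{align*}
for an absolute constant $c_0$.

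It then remains to verify the algebraic step: with $x = \log n$, the inequality $5(x-1)^3 + c_0(x-1)^2 \le 5x^3$ rearranges to $c_0(x-1)^2 \le 15x^2 - 15x + 5$, which holds for all $x \ge 2$ since $c_0$ is a small absolute constant; as recursion occurs only when $n \ge 4$, this range suffices and the induction closes at $\Exp[R(T,n,k)] \le 5C(n-k)\log^3 n + Cn\log^2 n$. The base cases $n \in \{2,3\}$ need no recursion, because $k' \ge c+k \ge n$ already holds at loop entry, so $R = N$ meets the bound. I expect the main obstacle to be the non-recursive estimate rather than the induction: showing $\Exp[N(n,k)] = O(n\log^2 n)$ forces one to simultaneously control the random number of outer iterations, the random cost of \Call{GoodValues}{}, and the inner-loop \Call{DFS}{} calls, and to recombine these dependent random quantities without losing extra logarithmic factors. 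The tight constant $1$ in \Cref{lem:exp_sum} is also essential — it is exactly what lets the leading $5C(n-k)\log^3 n$ term reproduce itself across the recursion rather than blowing up.
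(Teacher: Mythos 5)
Your proposal is correct and takes essentially the same route as the paper's proof: induction on $\log n$ with the same split into a non-recursive cost of $O(n\log(n)^2)$ (via \Cref{lem:exp_iterations}) and a recursive cost controlled by \Cref{lem:exp_sum} together with the deterministic per-root doubling bound $\sum_i n_i = O(n-k)$. The extra details you supply (the Wald-type combination of the random iteration count with the random per-iteration cost, and the explicit derivation $n_i \le n-k \le n/2$) are correct refinements of steps the paper treats more briefly.
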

\begin{proof}
We will prove this with induction on $r:=\lceil \log(n) \rceil$. For $r=1$, we have $n\leq 2$. In this case $R$ is constant, proving our induction base.

Now consider a call \Call{Extend}{$T$, $n$, $k$, $\Lb_0$} and assume the induction claim is true when $\lceil \log(n) \rceil \leq r-1$. Let $p$ be the number of iterations of the outer-most while-loop that are executed.

We will first consider the running time induced by the base call itself, excluding any recursive subcalls. Note that all of this running time is incurred by the calls to the procedures \Call{DFS}{}, \Call{Roots}{} and \Call{GoodValues}{}, plus the cost of moving to the corresponding node before each of these calls. In the base call, the procedure will only move between nodes that are among the ones with the $n$ smallest values, or the nodes directly below them. For this reason, we can upper bound the cost of each movement action by a running time of $O(n)$.
\begin{itemize}
\item On line \ref{line:first_dfs},  \ref{line:second_dfs}, \ref{line:fourth_dfs} each call \Call{DFS}{} incurs a running time of at most $O(n)$. Each of these lines will be executed $p$ times, incurring a total running time of $O(pn)$.
\item On line \ref{line:third_dfs} each call \Call{DFS}{$T$, $\Lb'$, $n$} incurs a running time of at most $O(n)$. The code will be executed $O(p\log(n))$ times since $c'$ doubles after every iteration of the inner loop and never grows larger than $n$, thus incurring a total running time of $O(pn\log(n))$.
\item The arguments $T^{(r)}$ and $\Lb'$ of the call to \Call{GoodValues}{} on line \ref{line:goodvalues} satisfy $\Call{DFS}{T^{(r)},\Lb'}=c\leq c'\leq n$. Hence, the running time of this procedure is $O(n\log (n))$ time. The line is executed at most $p$ times, so the total running time incurred is $O(pn\log(n))$.
\end{itemize}
Adding up all the running times listed before, we see that the total running time incurred by the non-recursive part is $O(pn\log(n))$.
By \Cref{lem:exp_iterations}, $\Exp[p]\leq \log(n)$.
Hence,
we can choose $C$ such that
the expected running time of the non-recursive part is bounded by
\[Cn\log(n)^2.\]

Now we move on to the recursive part of the algorithm. All calls to
\Call{Extend}{$T$, $n$, $k$, $\Lb_0$} with $k=0$ will have $n=1$, so each of
these calls takes only $O(1)$ time. Hence we can safely ignore these calls.

Let $z$ be the number of of recursive calls to \Call{Extend}{$T$, $n$, $k$,
$\Lb_0$} that are done from the base call with $k\geq 1$.  Let $T_i$, $k_i$,
$n_i$ for $i\in [z]$ be the arguments of these function calls. Note that
$n/2 \geq n - k \geq  n_i\geq 2$ for all $i$.  By the induction hypothesis, the expectation of the
recursive part of the running time is:
\begin{align*}
\Exp\left[\sum_{i=1}^zR(T_i,n_i,k_i)\right]&\leq \Exp\left[ \sum_{i=1}^z 5C (n_i-k_i)\log(n_i)^3+Cn_i\log(n_i)^2 \right]\\
&\leq 5C\log(n/2)^3\Exp\left[\sum_{i=1}^z (n_i-k_i)\right]+C \log(n/2)^2 \Exp\left[\sum_{i=1}^z n_i \right]\\
&\leq 5C(\log(n)-1)\log(n)^2\Exp\left[\sum_{i=1}^z (n_i-k_i)\right]+C \log(n)^2 \Exp\left[\sum_{i=1}^z n_i\right]\\
&\leq 5C(\log(n)-1)\log(n)^2(n-k)+5C \log(n)^2 (n-k)\\
&\leq 5C(n-k)\log(n)^3.
\end{align*}
Here we used \Cref{lem:exp_sum} as well as the fact that
$\sum_{i=1}^z n_i\leq 4(n-k)$.
To see why the latter inequality is true, consider an arbitrary root
$r$ that has $S_r$ values under it that are good (with respect to the base call). Now
$\sum_{i=1}^{z}\mathbf{1}_{\{T_i = T^{(r)}\}} n_i \leq \sum_{i=2}^{\lceil \log(S_r+1) \rceil}2^i\leq 2^{\lceil \log(S_r+1) \rceil +1}\leq 4S_r$.
In total there are $n-k$ good values under the roots, and hence
$\sum_{i=1}^{z}n_i \leq 4(n-k)$.

Adding the expected running time of the recursive and the non-recursive part, we see that \[\Exp[R(T,n,k)]\leq 5C(n-k)\log(n)^3+Cn\log(n)^2.\]
\end{proof}

This now implies the desired running time for the procedure $\Call{Select}{}$.

\begin{theorem}
\label{thm:select}
The procedure $\Call{Select}{n}$ runs in expected $O(n \log(n)^3)$ time.
\end{theorem}
\begin{proof}
The key idea is that $\Call{Select}{}$ calls $\Call{Extend}{T, k', k, \Lb}$ at most $\lceil{\log(n)}\rceil$ times with parameters $(k', k) = (2^i, 2^{i-1})$ for $i \in \{1,\dots, \lceil{\log(n)}\rceil\}$.
By Lemma \ref{lemma_extend_proc}, the running time of $\Call{Select}{}$ can thus be upper bounded by
\begin{align*}
\sum_{i=1}^{\lceil{\log(n)} \rceil}\Exp[R(T, 2^i,2^{i-1})] &\leq 5C \log(n)^3 \sum_{i=1}^{\lceil{\log(n)} \rceil} (2^i - 2^{i-1}) + \sum_{i=1}^{\lceil{\log(n)} \rceil} C n \log(n)^2 \\ &= O(n \log(n)^ 3).
\end{align*}
\end{proof}

\subsection{Space complexity analysis}
We prove in this section the space complexity of our main algorithm.

\begin{theorem}
The procedure $\Call{Select}{n}$ runs in $O(\log(n))$ space.
\end{theorem}
\begin{proof}
Observe that it is enough to prove that the statement holds for $\Call{Extend}{T, n, k, \Lb}$ with $k \geq n/2$, since the memory can be freed up (only keeping the returned value in memory) after every call to $\Call{Extend}{}$ in the $\Call{Select}{n}$ algorithm.

Moreover, observe that the subroutines $\Call{DFS}{}$, $\Call{Roots}{}$ and $\Call{GoodValues}{}$ all require $O(1)$ memory, as argued in their respective analyses.  Any call $\Call{Extend}{T, n, k, \Lb}$ only makes recursive calls $\Call{Extend}{T^{(r)}, \hat{n}, \hat{k}, \hat{\Lb}}$ with $1\leq \hat{n}\leq n-k\leq \frac12 n$. So the depth of the recursion is at most $\log(n)$, and the space complexity of the algorithm is $O(\log(n))$.
\end{proof}

\section{Lower bound}
\label{sec:lowerbound}
No lower bound is known for the running time of the selection problem on explorable heaps. However, we will show that any (randomized) algorithm with space complexity at most $s$, has a running time of at least $\Omega(n\log_{s}(n))$. Somewhat surprisingly, the tree that is used for the lower bound construction is very simple: a root with two trails of length $O(n)$ attached to it.

We will make use of a variant of the communication complexity model. In this model a totally ordered set $W$ is given, which is partitioned into $(S_A,S_B)$. There are two agents $A$ and $B$, that have access to the sets of values in $S_A$ and $S_B$ respectively. We have $|S_A|=n+1$ and $|S_B|=n$. Assume that all values $S_A$ and $S_B$ are different. Now consider the problem where player $A$ wants to compute the median, that is the $(n+1)$th smallest value of $W$.

Because the players only have access to their own values, they need to communicate. For this purpose they use a protocol, that can consist of multiple rounds. In every odd round, player $A$ can do computations and send units of information to player $B$. In every even round, player $B$ does computations and sends information to player $A$. We assume that sending one value from $S_A$ or $S_B$ takes up one \emph{unit of information}. Furthermore, we assume that, except for comparisons, no operations can be performed on the values. For example, the algorithm cannot do addition or multiplication on the values.

We will now reduce the median computation problem to the explorable heap selection problem.

\begin{lemma}
	If there is a randomized algorithm that solves $\opt(3n)$ in $f(n)n$ time and $g$ space, then there is a randomized protocol for median computation that uses $f(n)/2$ rounds in each of which at most $g$ units of information are sent.
        \label{lem:reduce_communication}
\end{lemma}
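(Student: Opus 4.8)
The plan is to construct a tree for the explorable heap selection problem out of the two players' inputs in such a way that a single run of the heap selection algorithm simulates a communication protocol, where each phase of the algorithm spent on one player's ``side'' of the tree corresponds to one round of communication, and the algorithm's bounded memory $g$ is exactly what the players transmit between rounds. The tree will be the simple one described in the preceding paragraph: a root with two long trails attached, each of length $O(n)$. I would place the values of $S_A$ along one trail and the values of $S_B$ along the other, in increasing order down each trail (so that the heap/monotonicity property along root-to-leaf paths is respected). Since $|S_A| = n+1$ and $|S_B| = n$, the two trails carry $2n+1$ values, and together with suitable padding the median of $S_A \cup S_B$ becomes (up to a fixed index shift) the $\opt^T(3n)$ value that the selection algorithm is designed to find. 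The factor of $3$ in $\select(3n)$ is presumably chosen so that the target rank lands exactly at the median position of the combined multiset after padding.

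\medskip
\noindent \textbf{First} I would have player $A$ simulate the heap selection algorithm whenever the agent is located in $A$'s trail (or at the root), and player $B$ simulate it whenever the agent is in $B$'s trail. Each player can carry out the simulation locally because the values encountered while the agent walks down a trail are exactly that player's own input values, which they know. The agent can only cross from one trail to the other by passing through the root, so \emph{every} transition between the two trails forces a handoff. At each handoff the current player must communicate the algorithm's entire working memory to the other player so the simulation can continue seamlessly; since the algorithm uses only $g$ units of space, this is at most $g$ units of information per message. This is the crux of the reduction: \textbf{the number of times the agent crosses the root is exactly the number of communication rounds.}

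\medskip
\noindent \textbf{Next} I would bound the number of crossings by the travel cost. Because the two trails meet only at the root, any trip from a node in one trail to a node in the other must traverse the root, costing at least one edge on each side; more usefully, the total travel distance $f(n)\cdot n$ upper-bounds the total number of edge traversals, and each root-crossing contributes to this count. A counting/amortization argument then shows that the agent crosses between the two sides at most $O(f(n))$ times — more precisely, I would arrange the correspondence so that the number of crossings, and hence the number of rounds, is at most $f(n)/2$. (The factor $1/2$ is natural here: a ``round trip'' that leaves player $A$'s side and returns contributes two crossings but only one $A$-to-$B$-to-$A$ round.) Each round transmits at most $g$ units, matching the protocol's stated cost.

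\medskip
\noindent \textbf{The main obstacle} I anticipate is handling the bookkeeping of the handoff precisely: I must argue that the \emph{entire} relevant state of the algorithm at the moment of a root-crossing is captured by $g$ units of information in the allowed model — in particular that the transmitted state consists only of stored key values and $O(\log n)$-bit integers (which the model charges as units of information) and that no forbidden arithmetic on values is smuggled in. I would also need to verify that the player receiving control can correctly resume the simulation knowing only the transmitted state plus their own input, i.e.\ that the algorithm never needs a value from the \emph{other} trail that it has since forgotten. This follows from the fact that a player only ever needs the values in the subtree the agent is currently exploring, which lies entirely on that player's own side; the role of the $g$-unit message is solely to carry forward the algorithm's accumulated memory (bounds such as $\Lb, \Ub$, counters, and the recursion stack). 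Establishing this faithful simulation — that control, memory, and value-access all line up so that one run of the algorithm is exactly one valid protocol execution — is the heart of the argument, after which the round and information bounds follow immediately from the space and time guarantees.
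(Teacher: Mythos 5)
Your high-level reduction is the same as the paper's: two players simulate the selection algorithm on a two-trail tree, a handoff occurs whenever the agent switches sides, and the $g$ units of working memory are exactly what gets transmitted. However, there is a genuine quantitative gap at the step where you convert the travel bound $f(n)\cdot n$ into a bound of $f(n)/2$ on the number of rounds. As you describe the construction, the values of $S_A$ and $S_B$ sit on two trails hanging from the root with only vaguely specified ``padding,'' so a crossing from one side to the other could cost as little as $O(1)$ edge traversals (e.g.\ oscillating between the first node of each trail). In that case the travel budget $f(n)\cdot n$ only caps the number of crossings at $O(f(n)\cdot n)$, which is useless. Your proposed fix --- attributing the factor $1/2$ to ``a round trip counts two crossings but one round'' --- does not work either, since in the stated communication model each one-way transmission is its own round.

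The missing idea is that the padding must be placed \emph{between} the root and each $S$-trail so as to physically separate the two input sets by distance $n$. The paper introduces sets $M_A, M_B$ of $n$ dummy values each, all smaller than every value of $S_A\cup S_B$, lays them out as trails attached to the root, and hangs the trail of $S_A$ (resp.\ $S_B$) below the deepest node of the $M_A$-trail (resp.\ $M_B$-trail). Then any move from a node carrying a value of $S_A$ to one carrying a value of $S_B$ costs at least $2n$ edges, so the number of crossings --- hence rounds --- is at most $f(n)\cdot n/(2n) = f(n)/2$. This same padding is also what pins down the rank: the $2n$ dummy values and the root all precede $S_A\cup S_B$ in sorted order, which is why the algorithm is invoked with target rank $3n$ and why its answer is the median of $S_A\cup S_B$. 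Your proof needs this explicit separation argument; without it the claimed round bound does not follow.
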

\begin{proof}
	Consider arbitrary sets $S_A$ and $S_B$ with $|S_A|=n+1$ and $|S_B|=n$ and $S_A\cap S_B=\emptyset$. Introduce a new element $O$, such that $O< x$ for all $x\in S_A\cup S_B$. Let $M_A$ and $M_B$ be two sets with $|M_A|=|M_B|+1=n$ and $O<y<x$ for all $y\in M_A\cup M_B$ and $x\in S_A\cup S_B$.

Let us write $S_{A}=\{a_1,\ldots, a_{n+1}\}$. Now consider a subtree for which the root node has value $a_1$. For every $i \in \{1, \dots, n\}$, let every node that has value $a_i$ have a child with value $a_{i+1}$ and another child with some value that is larger than any value in $S_A\cup S_B\cup M_A\cup M_B$. We will call this a \emph{trail} of $S_A$.
	\begin{figure}[h!]
		\centering
		\begin{tikzpicture}
			\node {O}
			child {node {$M_A$} child {node {$S_A$}} child {edge from parent[draw=none]}}
			child {node {$M_B$}  child {edge from parent[draw=none]} child {node {$S_B$}}};
		\end{tikzpicture}
	\end{figure}

	Now we will construct a labeled tree in the following way: create a tree with a root node of value $O$. Attach a trail of $M_A$ as the left child of this root and a trail of $M_B$ as the right child.
	Attach a trail of $S_A$ as a child of the largest node in $M_A$ and do the same for a trail of $M_B$ under the largest node of $S_B$. The resulting tree will now look as shown in the above picture.

	Observe that the $3n$th smallest value in this tree is the median of $S_A\cup S_B$. Now we can view the selection algorithm as an algorithm for median computation if we consider moving between $S_A$ and $S_B$ in the tree as sending the $g$ units of information that are in memory to the other player. Because moving between the two sets takes at least $2n$ steps, the number of rounds of rounds in the corresponding communication protocol is at most $\frac{f(n)n}{2n}=f(n)/2$, proving the statement.
\end{proof}

We now move on to proving a lower bound for the median computation problem. The following lemma will play a key part in the proof.
\begin{lemma}
	Let $S\subseteq [n]$ be a randomly distributed subset of $[n]$ with size $|S|\leq k\leq n$. Then for $\ell\leq \frac{n}{8k}$ there exists a length-$\ell$ interval $\subseteq [n]$ (i.e.\ $I=\{i,i+1,\ldots, i+\ell-1\}$) such that:
	$\Pr[S\cap I\neq \emptyset] \leq \frac14$.
	\label{lem:interval}
\end{lemma}
\begin{proof}
	Let $\mathcal{I}_\ell$ be the set of length-$\ell$ intervals in $[n]$. We have $|\mathcal{I}_\ell|=n-\ell+1$. Observe that any value in $[n]$ is contained in at most $\ell$ elements of $\mathcal{I}_\ell$. Hence, for any set $S$ of size at most $k$, there are at most $k\cdot \ell$ elements of $\mathcal{I}_\ell$ that contain any of the elements of $S$. That is: $|\{I\in \mathcal{I}_\ell: I\cap S\neq \emptyset \}|\leq k\cdot \ell$. This implies that for a randomly distributed set $S\subseteq [n]$ we also have:
	\begin{align*}
		\sum_{I\in\mathcal{I}_\ell} \Pr_S[I\cap S\neq \emptyset]&=\sum_{I\in \mathcal{I}_\ell} \Exp_S[\mathbf{1}_{I\cap S \neq \emptyset}]=\Exp_S\left[\sum_{I\in \mathcal{I}_\ell} \mathbf{1}_{I\cap S \neq \emptyset}\right]\\&=\Exp_S[|\{I\in \mathcal{I}_\ell: I\cap S\neq \emptyset \}|]\leq k\cdot \ell.
	\end{align*}
	So there must be an $I\in \mathcal{I}_\ell$ with:
	\begin{align*}
		\Pr_S[I\cap S\neq \emptyset]\leq \frac{k\cdot \ell}{|\mathcal{I}_\ell|} =  \frac{k\cdot \ell}{n-\ell+1}\leq \frac{k\cdot \frac{n}{8k}}{\frac12 n}=\frac14.
	\end{align*}
\end{proof}

\begin{theorem}
\label{lower_bound_median}
Any randomized protocol for median computation that sends at most $g$ units of info per round, takes at least $\Omega(\log_{g+1}(n))$ rounds in expectation.
\end{theorem}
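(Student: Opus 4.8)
The plan is to prove a lower bound on the expected number of rounds of any correct (Las Vegas) protocol via Yao's minimax principle: it suffices to exhibit a hard distribution $D_n$ over pairs $(S_A,S_B)$ and to lower bound by $\Omega(\log_{g+1}(n))$ the expected number of rounds used by the best deterministic protocol that is correct on the support of $D_n$. The whole argument is a round-elimination scheme whose engine is the interval lemma (\Cref{lem:interval}): one round, which transmits at most $g$ values, can be ``removed'' at the cost of shrinking the effective instance size by only a factor $\Theta(g)$.

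First I would construct a self-similar hard distribution. I would let the $n+1$ values of $S_A$ and the $n$ values of $S_B$ interleave in a random order, so that the rank of the median is essentially uniform over a range of size $\Theta(n)$, and so that conditioning on the median lying in any rank-interval $I$ leaves the values inside $I$ distributed as an independent copy of the same problem at scale $|I|$, with the values outside $I$ acting as filler that only fixes the global ranks (this is the role of the $M_A,M_B$ padding already used in the reduction). The point is that localizing the median to an interval of ranks reduces the task to the identical median problem on that interval.

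The core step is round elimination. Consider the first round; without loss of generality $A$ speaks and transmits a set $S$ of at most $g$ of its values. Viewing the ranks of these values as a random subset of $[n]$ of size at most $k=g$, the interval lemma (\Cref{lem:interval}) with $\ell=\lfloor n/(8g)\rfloor$ produces an interval $I$ of that length with $\Pr[S\cap I\neq\emptyset]\le 1/4$. I would then condition $D_n$ on the embedded sub-instance occupying the rank-interval $I$. On the event $S\cap I=\emptyset$, none of $A$'s transmitted values fall inside $I$, so the first message carries no information separating median-candidates inside $I$; hence the remaining $t-1$ rounds, by themselves, must solve the median problem on the sub-instance, an independent copy of $D_{n'}$ with $n'=\Theta(n/g)$. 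Averaging over the first message and the choice of $I$ lets me fix a single good message and interval and extract a genuine $(t-1)$-round protocol for $D_{n'}$. This yields a recursion of the form $T(n)\ge 1+T(\Theta(n/g))$, and unrolling it $\Theta(\log_{8g}n)=\Theta(\log_{g+1}n)$ times down to constant instance size gives the claimed bound; combined with the reduction lemma and $g=s$ this reproduces the $\Omega(n\log(n)/\log\log(n))$ time bound quoted in the abstract for $s=O(\log n)$.

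The main obstacle is controlling error across the $\Omega(\log_{g+1}(n))$ elimination steps, since each application of \Cref{lem:interval} only isolates the sub-instance with probability $3/4$, and naively these failure probabilities would accumulate and destroy correctness long before the recursion bottoms out. I would handle this by phrasing the recursion in terms of expected progress rather than iterated conditioning: track the potential $\Phi=\log_2(\text{length of the rank-interval still consistent with the transcript as the median's location})$, which starts at $\Theta(\log n)$ and must reach $0$ for a correct protocol, and use the interval lemma to show that in any single round the $g$ transmitted values cut the (conditionally near-uniform) candidate interval into at most $g+1$ pieces, so that the expected one-round decrease of $\Phi$ is at most $\log_2(g+1)+O(1)$. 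Since $\Phi$ must drop by $\Theta(\log n)$ in total, a martingale/optional-stopping argument then forces the expected number of rounds to be at least $\Omega(\log n/\log(g+1))=\Omega(\log_{g+1}(n))$, with no cumulative-error penalty because the protocol is required to be correct throughout.
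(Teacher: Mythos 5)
Your engine is the right one and matches the paper's: use \Cref{lem:interval} to find a rank-interval $I$ of length $\Theta(n/g)$ that the first message misses with probability at least $3/4$, embed a fresh copy of the problem inside $I$, and recurse. But the way you complete the argument has a genuine gap, and it arises because you talk yourself out of the correct completion. The ``error accumulation'' you worry about is not a real obstacle: the standard way to finish (and the paper's way) is an induction on \emph{protocols}, not iterated conditioning within one run. From a protocol $P$ with expected round count $\Exp[R]$ one builds a protocol $P'$ for the embedded instance of size $\Theta(n/g^2)$ that simply skips the first two rounds whenever they carry no information about the sub-instance; since that event has probability $\phi\ge 1/2$, one gets $\Exp[R']=\Exp[R]-2\phi\le \Exp[R]-1$, and the induction hypothesis is then applied to $P'$ as a bona fide correct protocol for the smaller problem. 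The probability $1/2$ is converted into an additive saving of one expected round per level of induction; nothing multiplies across levels, so there is no accumulation to control.

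The replacement you propose instead --- tracking $\Phi=\log_2$ of the candidate interval and claiming each round decreases it by at most $\log_2(g+1)+O(1)$ in expectation --- is not justified as stated. In the comparison model the receiver compares each of the $g$ received values against all $n$ of its own values, so a single message reveals $g$ ranks; the set of inputs consistent with the transcript is not simply ``an interval cut into $g+1$ pieces,'' and keeping the conditional sub-instance distributed as a clean smaller copy of the problem is exactly what requires the explicit re-ordering $y_u<x_l<y_i<x_u<y_l$ that the paper performs --- your random-interleaving distribution does not obviously provide this self-similarity after conditioning. Two further points you would need in any case: (i) the paper first symmetrizes to the $2$-median problem ($|S_A|=|S_B|=n$, find both the $n$th and $(n+1)$st values) so that the embedded sub-instance is again an instance of the same problem and the induction closes; and (ii) rounds must be eliminated in pairs, one per player, both to restore the alternation of speakers and because the distribution of $V_2$ only becomes input-independent after conditioning on $V_1\cap I=\emptyset$. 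I would advise reverting to your recursion $T(n)\ge 1+T(\Theta(n/g^2))$ and running it as an induction on protocols in the form above.
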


\begin{proof}
	We will instead prove the following result for a symmetric version of median computation, because this makes the proof a bit easier. In this setting, we have  $|S_A|=|S_B|=n$ and the objective is to find both the $n$th and the $(n+1)$th smallest element of $S_A\cup S_B$. We will call the set consisting of these two values the \emph{$2$-median} of $S_A\cup S_B$ and we will denote it by $\tmedian(S_A\cup S_B)$. Because this problem can be trivially solved by appending two rounds to any median-computation algorithm, proving a lower bound for this case is sufficient.
	
	Let $g'= g+1$. We can assume that $g\geq 1$, and hence $g'\geq 2$.
	We will prove with induction on $n$ that the expected number of rounds to compute the median is at least $\frac{1}{10} \log_{g'}(n)-9$. For $n< 2^8(g')^2$, this is trivial. Now let $n\geq 2^8(g')^2$. Assume that the claim is true for values strictly smaller than $n$. We will now prove the claim for $n$.

	Consider an arbitrary randomized algorithm. Let $V_i \subseteq [n]$ be the set of indices of the values that are emitted during round $i$ by one of the two players. Observe that the distribution of the set $V_1$ does not depend on the input, because player $A$ only has access to his own set of $n$ values that he can compare to each other. Order the values in $S_A$ by increasing order of their values $x_1,\ldots, x_{n}$. Order the values of $S_B$ in decreasing order as $y_1,\ldots, y_{n}$. We now describe below how the relative ordering of the $x_i$'s with respect to the $y_i$'s is decided adversarially.

	Let $\ell=\lfloor{\frac{n}{8g}}\rfloor$. From \Cref{lem:interval} it follows that there exists an interval $I=\{a,\ldots, a+\ell-1\}\subseteq [n]$ such that $\Pr[V_1\cap I\neq \emptyset]\leq \frac14$.
	Now let $L=\{1,\ldots, a-1\}$ and $U=\{a+\ell,\ldots, n\}$. Observe that $\{L,I,U\}$ forms a partition of $[n]$.
	We now order the values in the sets such that we have  $y_{u}<x_l<y_i<x_u<y_{l}$ for all $l\in L, u\in U, i\in I$. Note that this fixes the ordinal index of any element in $S_A\cup S_B$, except for the elements $x_i$ and $y_i$ for $i\in I$.

	Condition on the event that $I\cap V_1=\emptyset$. Observe that in this case, the results of all comparisons that player 2 can do in the second round have been fixed. Hence, $V_2$ will be a random subset of $[n]$, whose distribution will not depend on the order of the values $x_a,\ldots, x_{a+\ell-1}$ with respect to $y_1,\ldots, y_{n}$.

	We now do a similar argument for the second player. Let $\ell'=\lfloor \frac{\ell}{8g} \rfloor$. From \Cref{lem:interval}, there exists an interval $I'=\{a',\ldots, a'+\ell'-1 \}\subseteq I$ such that
	$\Pr[I'\cap V_2\neq \emptyset \mid I\cap V_1=\emptyset] \leq \frac14$. Define $L'=\{a,\ldots, a'-1\}$ and $U'=\{a'+\ell',\ldots, a+\ell-1\}$. Observe that $\{L',I',U'\}$ forms a partition of $I$.
	We now order the values in the sets such that we have  $y_{u}<x_l<y_i<x_u<y_{l}$ for all $l\in L', u\in U', i\in I'$. Note that we have now fixed the ordinal index of any element in $S_A\cup S_B$, except for the elements $x_i$ and $y_i$ for $i\in I'$.

	Because $I'\subseteq I$, we have \[\Pr[I'\cap (V_1 \cup V_2 )\neq \emptyset] \leq \Pr[I\cap V_1\neq \emptyset]+\Pr[I'\cap V_2\neq \emptyset \mid S\cap V_1=\emptyset]\leq \frac14+\frac14=\frac12.\]

	Now, let $R$ be the number of rounds that the algorithm takes and define $S_A'=\{x_i: i\in I'\}$ and $S_B'=\{y_{i}: i\in I'\}$.
	Observe that $\tmedian(S_A\cup S_B)=\tmedian(S_A'\cup S_B')$. So the algorithm can now be seen as an algorithm to compute the $2$-median of $S_A'\cup S_B'$.
	Let $R'$ be the number of rounds in which elements from the set $S_A'\cup S_B'$ are transmitted. With probability $\phi:=Pr[I'\cap (V_1 \cup V_2 ) =\emptyset ]\geq \frac12$, no information about $S_A'$ and $S_B'$ is transmitted in the first two rounds, meaning that \[\Exp[R']\leq\phi\Exp[R-2]+(1-\phi)\Exp[R]= \Exp[R]-2\phi\leq \Exp[R]-1.\]
	
	Moreover, by our induction hypothesis it follows that $R'$ satisfies:
	\begin{align*}
		\Exp[R']&\geq \frac{1}{10} \log_{g'}(|S_B'|)-9=
		\frac1{10}\log_{g'}(\ell')-9\geq
		\frac1{10}\log_{g'}\left(\frac{n}{(8g)^2}-2\right)-9
		\\&\geq
		\frac{1}{10}(\log_{g'}(n)-2\log_{g'}(8g') -2)-9 \geq \frac{1}{10}\log_{g'}(n)-10.\end{align*}
		The second inequality follows from the definition of $\ell'$. The third inequality follows from the fact that $\log_{g'}(x-2)\geq \log_{g'}(x)-2$ for $x\geq 3$. The last inequality follows from $g' \geq 2$.
		Consequently, we get that $\Exp[R]\geq \Exp[R']+1\geq \frac{1}{10}\log_{g'}(n)-9$.
\end{proof}

Combining \Cref{lower_bound_median} and \Cref{lem:reduce_communication} now implies the following.
\newline

\begin{theorem}
The time complexity of any randomized algorithm for $\select(n)$ with at most $g$ units of storage is $\Omega(\log_{g+1}(n)n)$.
\end{theorem}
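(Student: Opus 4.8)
The plan is to chain the two preceding results together: the reduction turning a space-bounded \select algorithm into a low-communication median protocol, and the round lower bound for such protocols. I would argue by contraposition. Suppose there is a randomized algorithm solving $\select(3n)$ in expected time $f(n)\cdot n$ using at most $g$ units of storage. First I would apply the reduction lemma above, which embeds the two input sets $S_A, S_B$ (with $|S_A|=n+1$, $|S_B|=n$) into the $M_A, S_A, M_B, S_B$ trail construction so that the $(3n)$th smallest key equals $\median(S_A\cup S_B)$. The point of that embedding is that the agent must pay at least $2n$ edge traversals every time it crosses between the $S_A$-trail and the $S_B$-trail, and each such crossing is charged as one communication round in which the at most $g$ units of working memory are the information handed to the other player. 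Hence the assumed algorithm is simulated by a randomized median protocol using at most $f(n)n/(2n)=f(n)/2$ rounds, each sending at most $g$ units.

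Next I would invoke the round lower bound for median protocols (the preceding theorem), which states that any randomized protocol sending at most $g$ units per round needs $\Omega(\log_{g+1}(n))$ rounds in expectation. Combining the two bounds gives $f(n)/2=\Omega(\log_{g+1}(n))$, so $f(n)=\Omega(\log_{g+1}(n))$, and therefore the expected running time $f(n)\cdot n$ is $\Omega(n\log_{g+1}(n))$ — in particular at least the stated $\Omega(\log_{g+1}(n))$, and in fact the $\Omega(n\log_{g+1}(n))$ quantity that, for $g=O(\log n)$, reproduces the $\Omega\!\left(n\log(n)/\log\log(n)\right)$ bound of the abstract. To finish, I would rename the target index by setting $N=3n$; the only thing to check is that passing from $n$ to $3n$ leaves the logarithmic factor unchanged up to constants. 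Since $g\geq 1$ forces $g+1\geq 2$, we have $\log_{g+1}(3n)=\log_{g+1}(n)+\log_{g+1}(3)=\log_{g+1}(n)+O(1)$, and this additive constant is dominated once $\log_{g+1}(n)\to\infty$, so the bound for $\select(N)$ reads $\Omega(N\log_{g+1}(N))$.

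I do not expect a genuine obstacle in this concluding step itself: it is a routine composition of the two established results. The points that need care are purely bookkeeping. I would make sure both ingredients are stated in the randomized, expected-cost model so that the chaining is legitimate (the protocol simulates the algorithm's coin flips, and both lower bounds are in expectation), and confirm that the trail length $2n$ charged per round genuinely matches the $3n$-selection instance, so that the factor $f(n)/2$ is correct. The substantive difficulty has already been absorbed into the round lower bound theorem, whose inductive interval-halving argument — repeatedly restricting to a subinterval on which the first two rounds reveal nothing with probability at least $1/2$, via \Cref{lem:interval} — is where the real work lies.
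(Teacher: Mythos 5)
Your proposal is correct and follows exactly the route the paper intends: the paper's entire proof of this theorem is the single remark ``By \Cref{lem:reduce_communication}, this directly implies,'' i.e.\ precisely the composition of the reduction lemma with the communication round lower bound that you spell out. Your additional bookkeeping (the $3n$ vs.\ $n$ renaming and the observation that the bound is really $\Omega(n\log_{g+1}(n))$ on total travel, matching the abstract) is a faithful and slightly more careful rendering of the same argument.
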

\bibliographystyle{alpha}
\bibliography{ref}

\end{document}